\documentclass[11pt,a4paper]{article}
\usepackage[authoryear]{natbib}
\usepackage{graphics}
\usepackage{ifpdf}
\usepackage{graphicx}
\usepackage{amssymb,amstext}
\usepackage{amsmath,latexsym,xspace}
\usepackage{array}
\usepackage{setspace} 
\usepackage{comment}
\usepackage{authblk}
\usepackage{diagbox}
\usepackage{multirow}
\usepackage{caption}
\usepackage{subcaption}
\usepackage{hyperref}

\setlength{\parindent}{0cm} 

%\headsep=50pt
%\setlength{\parskip}{1.2ex}
%\setlength{\parindent}{0.0mm}
%\setlength{\textwidth}{16.5cm}
%\setlength{\textheight}{20cm}
%\setlength{\hoffset}{-0.8in}
%\addtolength{\topmargin}{-2cm}
%\addtolength{\textheight}{3cm}
%\addtolength{\oddsidemargin}{-0.5cm}
%\addtolength{\textwidth}{1cm}
%\linespread{1.05}

\usepackage{amsthm}

\setlength{\voffset}{-0.2in}

\pdfadjustspacing=1
\usepackage[scale={0.7,0.85},centering,includeheadfoot]{geometry}

\parindent=1.1pc
\setcounter{section}{0}

\newtheorem{theorem}{Theorem}[section]

\def\E{\text{E}}
\def\Cov{\text{Cov}}
\def\mm#1{\ensuremath{\boldsymbol{#1}}} 
\def\AR1{AR$(1)$\xspace}

\setkeys{Gin}{width=0.45\textwidth}

%\definecolor{links}{HTML}{2A1B81}
%\hypersetup{colorlinks,linkcolor=,urlcolor=links}

\begin{document}
\begin{titlepage}
    \title{An approximate fractional Gaussian noise model with
        ${\mathcal O}(n)$ computational cost}
    \author[1]{ Sigrunn H.\ S\o rbye$^{1}$, Eirik Myrvoll-Nilsen}
    \author[2]{H\aa vard Rue}
    \affil[1]{Department of Mathematical Sciences, UiT The Arctic
        University of Norway, Troms{\o}, Norway}
    \affil[2]{CEMSE Division, King Abdullah University of Science and
        Technology, Thuwal, Saudi Arabia} \date{\today}
\end{titlepage}
\maketitle

\begin{abstract}
    Fractional Gaussian noise (fGn) is a stationary time series model
    with long memory properties applied in various fields like
    econometrics, hydrology and climatology. The computational cost in
    fitting an fGn model of length $n$ using a likelihood-based approach
    is ${\mathcal O}(n^{2})$, exploiting the Toeplitz structure of the
    covariance matrix. In most realistic cases, we do not observe the
    fGn process directly but only through indirect Gaussian observations, so
    the Toeplitz structure is easily lost and the computational cost
    increases to ${\mathcal O}(n^{3})$. This paper presents an approximate fGn model of 
    ${\mathcal O}(n)$ computational cost,  both with direct or indirect
    Gaussian observations, with or without conditioning. This is
    achieved by approximating fGn with a weighted sum of independent
    first-order autoregressive processes, fitting the parameters of
    the approximation to match the autocorrelation function of the fGn
    model. The resulting approximation is stationary despite being
    Markov and gives a remarkably accurate fit using only four
    components. The performance of the
    approximate fGn model is demonstrated in simulations and two real data examples.  
\end{abstract}

{\bf Keywords:} Autoregressive processes, Gaussian Markov random field, long memory, R-INLA, Toeplitz matrices
 
\section{Introduction}

Many natural processes observed in either time or space exhibit long
memory dependency structures. One way to characterise long memory is
in terms of the autocorrelation function having a slower than
exponential decay, as a function of increasing temporal or geographical distance 
between observational points. In 
second-order stationary time series, long memory implies that the 
autocorrelations are not absolutely summable \citep{mcleod:78}.
Long memory behaviour has been observed within a vast variety of time
series applications, like hydrology \citep{hurst:51, hosking:84},
geophysical time series \citep{mandelbrot:69}, network traffic
modelling \citep{willinger:96}, econometrics \citep{
    baillie:96, cont:05} and climate data analysis \citep{franzke:12, rypdal:14}.
For comprehensive introductions to long memory processes and their
applications, see for example  \cite{taqqu:03} and \cite{beran:13}.

Fractional Gaussian noise is defined as the increment process of
fractional Brownian motion. It is a stationary and parsimoniously
parameterised model, with dependency structure characterised by the
Hurst exponent $H$, 
which gives long memory when $1/2 < H < 1$. 
The computational cost of likelihood-based inference in fitting an fGn process 
of length $n$ is ${\mathcal O}(n^{2})$, using the Durbin-Levinson or Trench algorithms \citep{mcleod:07,golub:96, durbin1960, levinson1947, trench:64}. These algorithms make use of the Toeplitz structure of the covariance matrix of fGn and are considered to be sufficiently fast when $n$ is not too large  \citep{mcleod:07}.  A main 
problem  is that the required Toeplitz structure is fragile to modifications of the Gaussian 
observational model and  computations
of conditional distributions. For example, the Toeplitz structure is destroyed if fGn is observed indirectly 
with Gaussian inhomogeneous noise, or has missing data. Without the Toeplitz structure, 
the computational cost in fitting fGn increases to
${\mathcal O}(n^{3})$, which is far too expensive in many real data applications. 

This paper presents an accurate and computationally efficient 
approximate fGn model of cost  ${\mathcal O}(n)$, both
with direct and indirect Gaussian observations, with or without
additional conditioning. This represents a major computational
improvement compared with existing approaches, and allows routinely use
of fGn models with large $n$, with negligible loss of accuracy. The new approximate model uses
 a weighted sum of independent first-order autoregressive processes (\AR1).
 The motivation is that aggregation of short-memory processes plays an important role to explain long memory behaviour in time series  \citep{beran:10} and an infinite weighted sum of AR(1) processes will give long memory 
\citep{granger1980}. However, the physical intuition about long memory processes suggests that we leverage this
relationship much more than the asymptotic result states, see for
example \cite{aggregationsims}.  The new approximate model only needs a weighted sum of a few AR(1) 
processes to be accurate. We obtain this by fitting the weights and the coefficients of the approximation
to mimic the autocorrelation function of the exact fGn model, as a
continuous function of $H$. 

A key feature of the approximate fGn model is a high degree of
conditional independence within the model. Specifically, the
approximate model will be represented as a Gaussian Markov random
field (GMRF), which computational properties are not destroyed by
indirect Gaussian observations nor conditioning~\citep{ruebok}. The
approximate model is also stationary, a desired property which is not
common among GMRFs, as they typically have boundary effects. Since the
approximate model is a local GMRF, it also fits well within the framework
of latent Gaussian models for which approximate Bayesian
analysis is obtained with integrated nested Laplace approximations (INLA)
\citep{inlartikkel} using the \texttt{R}-package \texttt{R-INLA} (\href{http://www.r-inla.org}{\texttt{http://www.r-inla.org}}).  A different aspect is that an aggregated model of a few \AR1 
components could actually represent a more plausible and interpretable model than fGn in
real data applications. Specifically, the approximate model can serve as a tool for automatic source separation in
situations where the data at hand represent combined signals.

The plan of this paper is as follows. Section~\ref{sec:fgn} reviews
the computational cost in fitting the exact fGn model to direct and
indirect Gaussian observations. Section~\ref{sec:approx} presents the
new approximate fGn model and derives the computational cost and
memory requirement for evaluating the log-likelihood. The performance
of the approximate model is demonstrated by simulations in
Section~\ref{sec:results}, also including a study of its predictive
properties. In Section~\ref{sec:real-data}, we use the implicit source separation ability in decomposing the historical 
dataset of annual water level minima for the Nile river \citep{toussoun:1925, beranbok}. Implementation within the class of 
latent Gaussian models is demonstrated in analysing a
monthly mean surface air temperature series for Central England \citep{manley:53,hadcet,parker:92}. 
 Concluding remarks are given in
Section~\ref{sec:conclusion}.

\section{The computational cost of fGn}
\label{sec:fgn}

Let $\mm{x}=(x_1,\ldots , x_n)^T$ be a zero-mean fGn process, $\mm{x}\sim {\mathcal N}_n(0,\mm{\Sigma})$. The elements of the
covariance matrix, $\Sigma_{ij}=\sigma^2\gamma_{\mm{x}} (k)$, $k=|i-j|$, are
defined by the autocorrelation function
\begin{equation*}
    \gamma_{\mm{x}} (k) = \frac{1}{2} \Big(  |k-1|^{2H} -2|k|^{2H} +
    |k+1|^{2H}  \Big),\quad k=0,1,\ldots , n-1.
    \label{eq:acf-fgn}
\end{equation*}
This function has a hyperbolic decay
$\gamma_{\mm{x}}(k)\sim H(2H-1)k^{2(H-1)}$ as $k\rightarrow \infty$. The fGn process has long memory when $1/2 < H < 1$, while 
it  reduces to white noise when $H=1/2$. When  $0< H < 1/2$, the fGn model has anti-persistent properties, but
we do not discuss this case here.

When fGn is observed directly, we  estimate $H$ by maximizing the log-likelihood function
\begin{equation*}
    \log(\pi(\mm{x}))=-\frac{n}{2}\log(2\pi) +\frac{1}{2}\log|\mm{Q}| 
    -\frac{1}{2} \mm{x}^T\mm{Q}\mm{x},\label{eq:mle}
\end{equation*}
where $\mm{Q}=\mm{\Sigma}^{-1}$ is the precision matrix of $\mm{x}$.
Making use of the Toeplitz structure of
$\mm{\Sigma}$, the likelihood is evaluated in $\mathcal{O}(n^2)$
flops using the Durbin-Levinson algorithm \citep[Algorithm 4.7.2]{golub:96}.
Also, the precision matrix $\mm{Q}$ can be calculated in
$\mathcal{O}(n^2)$ flops by the Trench algorithm \citep[Algorithm
4.7.3]{golub:96}. In general, the Trench algorithm can be combined
with the Durbin-Levinson recursions \citep[Algorithm 5.7.1]{golub:96},
to calculate the exact likelihood of general linear Gaussian time
series models \citep{mcleod:07}.

A major drawback of relying on these algorithms for Toeplitz matrices
is that the Toeplitz structure is easily destroyed if the time series is observed indirectly. 
Consider a simple regression setting in
which an fGn process is observed with independent Gaussian random
noise, 
\begin{equation}
    \mm{y}=\mm{x}+\mm{\epsilon},\label{eq:regr}
\end{equation}
where $\mm{\epsilon}\sim {\mathcal N}_n(\mm{0},\mm{D}^{-1})$ and $\mm{D}$ is
diagonal. The log-density of $\mm{x}\mid\mm{y}$ is
\begin{equation*}
    \log \pi(\mm{x}\mid \mm{y}) = 
    \frac{1}{2}\log|\mm{Q} + \mm{D}| 
    -\frac{1}{2}\mm{x}^T(\mm{Q}+\mm{D})\mm{x}
    +\mm{y}^T\mm{D}\mm{x} + \text{constant}.
    \label{eq:conditioning}
\end{equation*}
The conditional mean of $\mm{x}$ is found by solving
$(\mm{Q}+\mm{D})\mm{\mu}_{\mm{x}\mid \mm{y}}=\mm{D}\mm{y}$ with
respect to $\mm{\mu}_{\mm{x}\mid \mm{y}}$, while the marginal
variances equal $\mbox{diag}\{(\mm{Q}+\mm{D})^{-1}\}$. The Toeplitz
structure of $\Cov(\mm{y})=\mm{Q}^{-1}+\mm{D}^{-1}$ is only retained when the noise term has
homogeneous variance, i.e.\ $\mm{D}^{-1} \propto\mm{I}$. With
non-homogenous observation variance or missing data, 
the computational cost in fitting \eqref{eq:regr} would
require general algorithms of cost $\mathcal{O}(n^3)$. This
makes analysis of many real data sets infeasible, or at best challenging.

The motivation for expressing the log-likelihood function in terms of
the precision matrix $\mm{Q}$, is to prepare for an approximate GMRF
representation of the fGn model. We have already noted that
aggregation of an infinite number of short-memory processes can explain
long memory behaviour in time series. This implies
that $\mm{Q}$ is (or can be approximated with) a sparse band matrix,
but with a larger dimension (still denoted by $n$) for a finite sum.
Assume for a moment that such an approximation exists.  We can then apply 
general numerical algorithms for sparse matrices which only depend on the non-zero
structure of the matrix. This implies that the numerical cost in 
dealing with $\mm{Q}$ or $\mm{Q} + \mm{D}$, is the same. Conditioning
on subsets of $\mm{x}$ implies nothing else than working with a submatrix
of $\mm{Q}$ or $\mm{Q}+\mm{D}$, and does not add to the computational
costs; see \cite[Ch.~2]{ruebok} for details. Specifically, we can
make use of the Cholesky decomposition, in which the relevant precision matrix
$\mm{Q}+\mm{D}$ is factorised as $\mm{Q} + \mm{D}=\mm{L}\mm{L}^T$,
where $\mm{L}$ is a lower triangular matrix. The log-likelihood
is then evaluated with negligible cost  as the
log-determinant is $\log|\mm{Q} + \mm{D}|=2\sum_{i=1}^n L_{ii}$ \citep{rue:01}. The
conditional mean is found by solving $\mm{L}\mm{u} = \mm{D}\mm{y}$ and
$\mm{L}^{T}\mm{\mu}_{\mm{x}\mid \mm{y}} = \mm{u}$. The numerical cost
in finding the Cholesky decomposition depends on the non-zero
structure of the matrix. For time-series models (or long skinny
graphs), the cost is ${\mathcal O}(n)$ \citep{ruebok}. 
The explicit construction of such an approximation is discussed next.

\section{An approximate fGn model}\label{sec:approx}
This section presents an approximate fGn
model which is a weighted sum of just a few independent \AR1 processes. 
We will fit the parameters of the approximation 
 to mimic the autocorrelation structure of fGn
up to a given finite lag. The resulting approximate model is a GMRF 
with a banded precision matrix of fixed
bandwidth, which gives a computational cost of  ${\mathcal O}(n)$.

\subsection{Fitting the autocorrelation function}

Define $m$ independent \AR1 processes by
\begin{equation}
    z_{j,t}= \phi_{j}z_{j,t-1} + \nu_{j,t}, \qquad j=1,\ldots , m,
    \quad t=1,\ldots , n,\label{eq:ar1}
\end{equation}
where $0< \phi_j <1$ denotes the first-lag autocorrelation
coefficient of the $j$th \AR1 process. Further, let
$\{\nu_{j,t}\}_{j=1}^m$ be independent zero-mean Gaussians, with
variance $\sigma^2_{\nu,j}=(1-\phi_j^2)$. Define the
cross-sectional aggregation of the \AR1 processes,
\begin{equation}
    \tilde{\mm{x}}_m= \sigma \sum_{j=1}^m\sqrt{w_j} \mm{z}^{(j)},
    \label{eq:optimgrunnlag0}
\end{equation}
where $\mm{z}^{(j)} = (z_{j,1},\ldots ,z_{j,n})^T$ and
$\sum_{j=1}^m w_j=1$. This implies that $\mbox{Var}( \tilde{\mm{x}}_m)=\sigma^2$. The finite-sample properties of a 
similar aggregation of AR(1) processes are studied in \cite{aggregationsims}, where $w_j=1/m$, $\sigma^2_{\nu}=1$ and 
where the coefficients $\phi_j$ are Beta distributed. 
They conclude that 
``\emph{one should
be aware that cross-sectional aggregation leading to long memory is an asymptotic feature
 that applies for the cross-sectional dimension tending to infinity. In finite samples and for moderate cross-sectional dimensions
 the observed memory of the series can be rather different from the theoretical memory}''. 
 
The approximation presented here only needs a small value of the cross-sectional dimension $m$
to be accurate.  The key idea to our approach is to fit the weights $\mm{w}=\{w_j\}_{j=1}^m$ and the
autocorrelation coefficients $\mm{\phi}=\{\phi_j\}_{j=1}^m$ in \eqref{eq:optimgrunnlag0}
to match the autocorrelation function of fGn, as a function of $H$. 
The autocorrelation function of
\eqref{eq:optimgrunnlag0} follows directly as
\begin{equation*}
    \gamma_{\tilde{\mm{x}}_m}(k) = \sum_{j=1}^m w_j \phi_j^{|k|}, 
    \quad k=0,1,\ldots , n-1.
    \label{eq:acf-approx}
\end{equation*} 
Now, fix a value of $1/2<H<1$. We fit the  weights and coefficients
$(\mm{w},\mm{\phi})_H$ by minimizing the weighted squared
error
\begin{equation}
    (\mm{w},\boldsymbol \phi)_H = 
    \underset{(\mm{w},\boldsymbol \phi)}{\text{argmin}} 
    \sum_{k=1}^{k_{\max}} \frac{1}{k}\big(\gamma_{\tilde{\mm{x}}_m}(k)
    -\gamma_{\mm{x}}(k)  \big)^2, 
    \label{eq3:optimprosedyre}
\end{equation}
where $k_{\max}$ represents a user-specified upper limit (we use
$k_{\max} = 1000$). The squared error is weighted by $1/k$  to ensure a good fit for
the autocorrelation function close to lag 0, while less weight is given
to tail behaviour as the autocorrelation function is decaying slowly.

By a quite huge calculation done only once, we find
$(\mm{w},\boldsymbol \phi)_H$ for a fine grid of $H$-values. 
Spline interpolation is used for values of $H$ in between, to represent the weights and coefficients as
continuous functions of $H$. The interpolation and 
fitting are performed using reparameterised weights and coefficients to ensure
uniqueness and improved numerical behaviour. These reparameterisations
are defined as
\begin{equation*}
    w_j = \frac{e^{v_j} }{\sum_{i=1}^m e^{v_i}} \quad\text{and}\quad
    \phi_j = \frac{1}{1 + \sum_{i=1}^j e^{-u_i}},
    \label{eq3:weightparam}
\end{equation*}
where $j=1,\ldots , m$ and 
where $v_1=0$. The Hurst exponent is transformed as
$H = 1/2 + 1/2 \exp(h)/(1+\exp(h))$. This ensures a stable and
unconstrained parameter space on $\mathbb{R}^{2m-1}$ for fixed $h$,
where $\phi_1 > \cdots > \phi_m$. Note that the error of the fit tends
to zero, when $H$ goes to 1 or $1/2$. The resulting coefficients and
weights for $m=3$ and $m=4$ are displayed in Figure~\ref{fig:mapping}.
The fitted weights and coefficients are also available in \texttt{R} using the
function \texttt{INLA::inla.fgn}.

\begin{figure}[ht]
    \begin{center}
    \includegraphics[width=0.4\textwidth,angle=0]{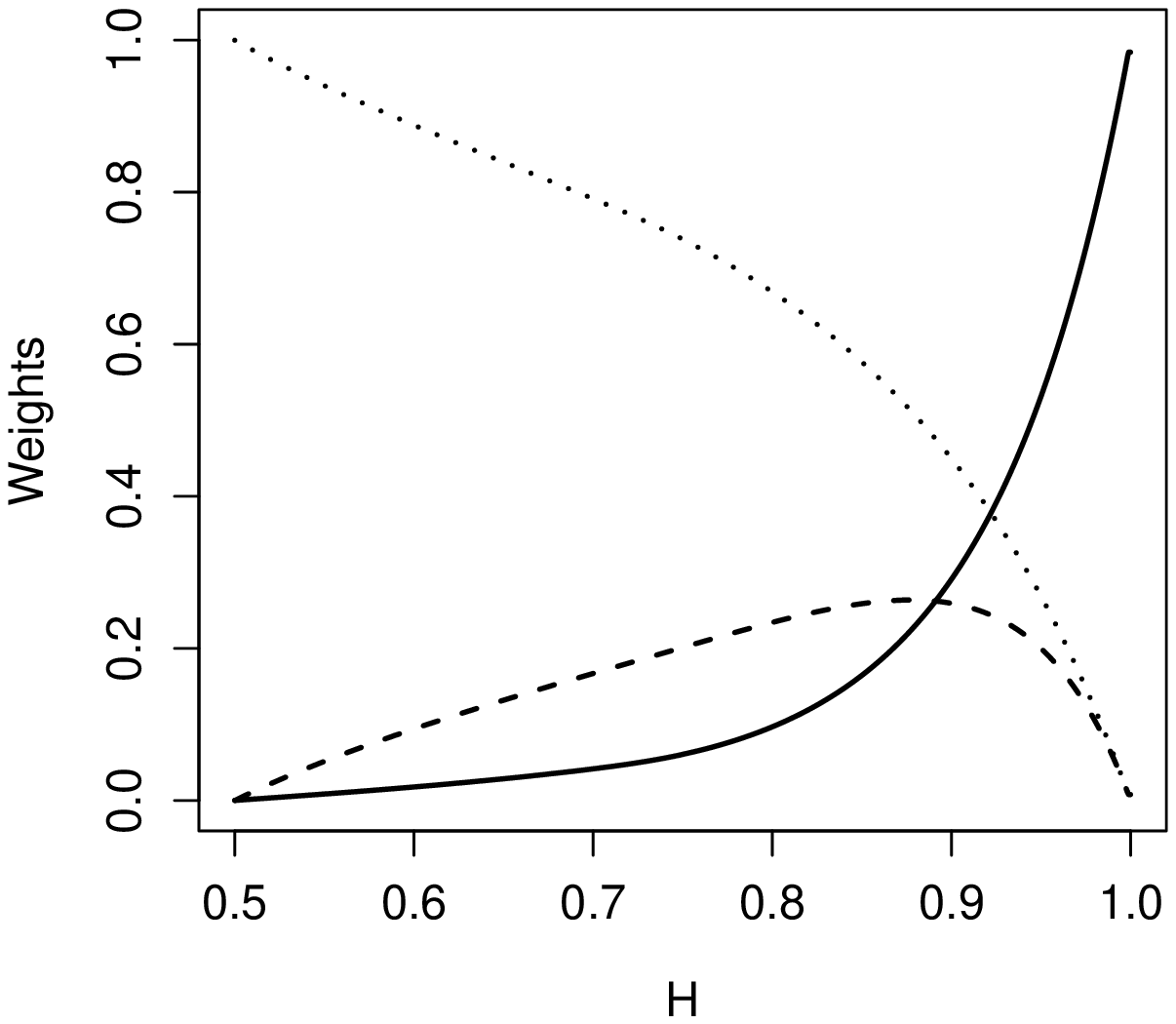}\hspace{0.5cm}
    \includegraphics[width=0.4\textwidth,angle=0]{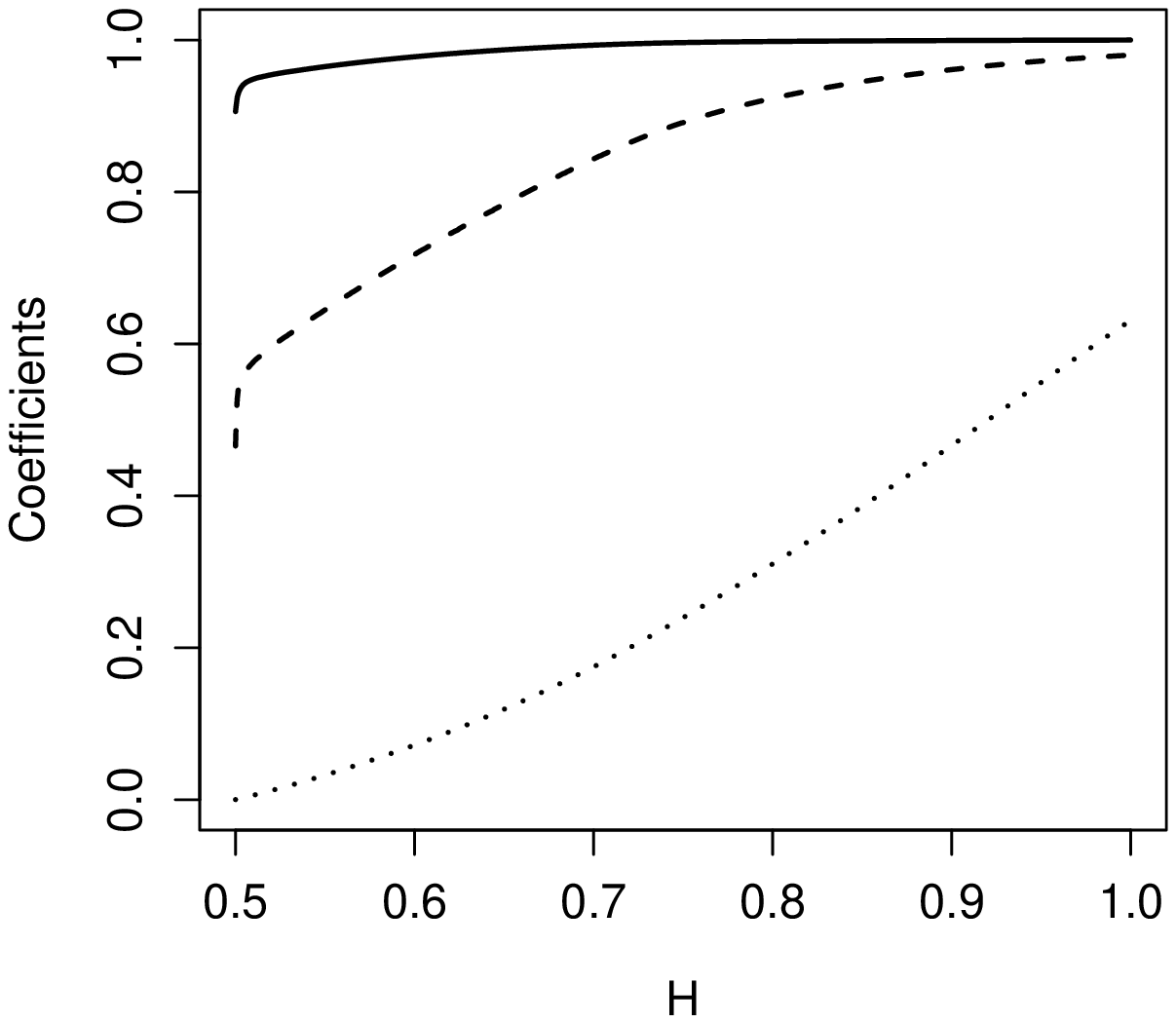}\\
    \includegraphics[width=0.4\textwidth,angle=0]{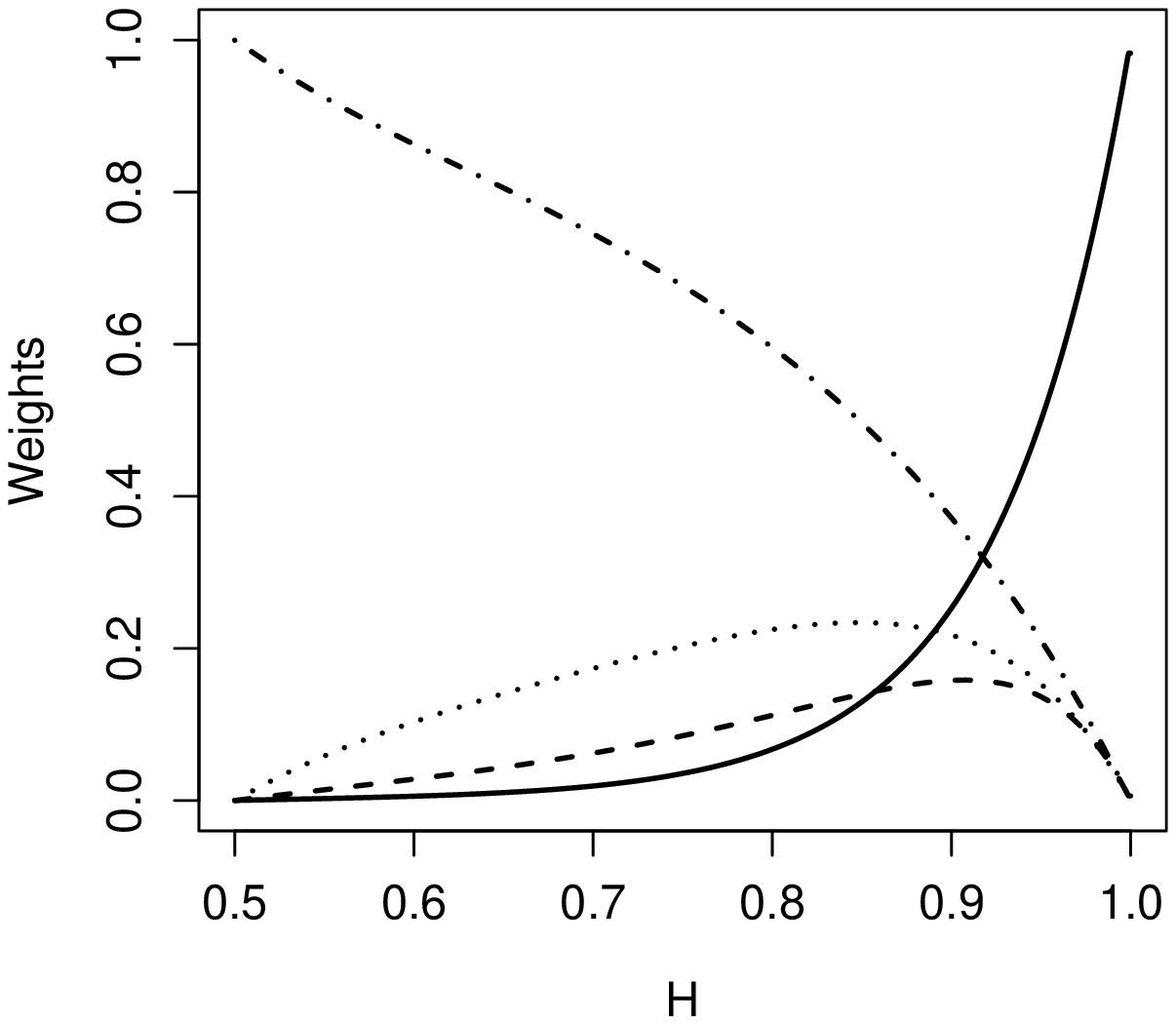}\hspace{0.5cm}
    \includegraphics[width=0.4\textwidth,angle=0]{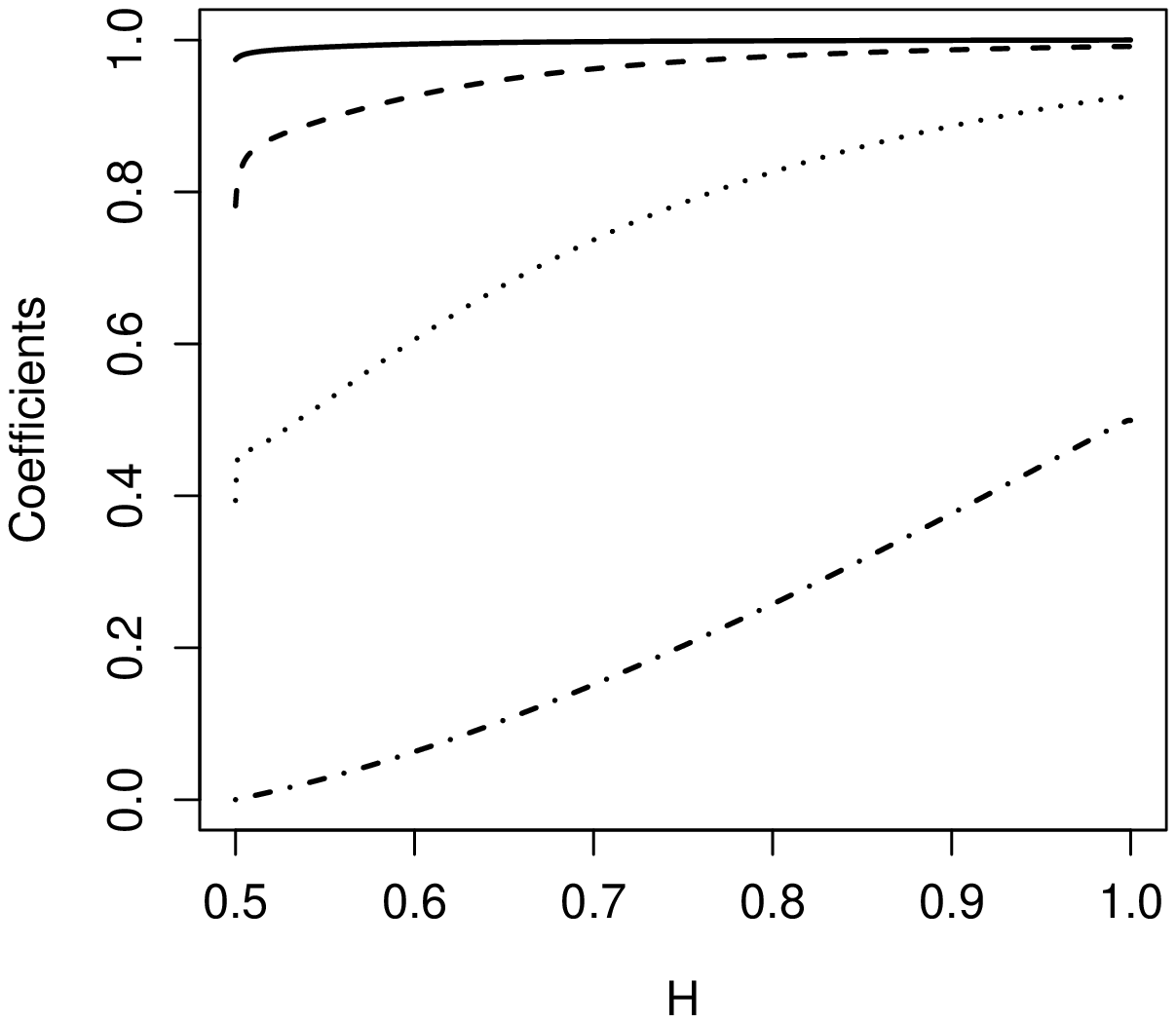}
     \caption{Upper panels: The fitted weights and coefficients
        $(\mm{w},\boldsymbol \phi)_H$ in \eqref{eq3:optimprosedyre} 
        using $m=3$ components in the approximation. Lower panels: Similar using $m=4$.} 
    \label{fig:mapping}
    \end{center}
\end{figure}

\subsection{A Gaussian Markov random field representation}
\label{sec:cost}

We will now discuss the precision matrix for the approximate fGn
model. We start with one \AR1 process~\eqref{eq:ar1} of length $n$, with unit variance
and a tridiagonal precision matrix
\begin{displaymath}
    \mm{R}(\phi_j)  = \frac{1}{1-\phi_j^{2}}
    \begin{pmatrix}
        1 & -\phi_j&& &\\
        -\phi_j & 1+\phi_j^2 & -\phi_j &&\\
        &\ddots&\ddots & \ddots & &\\
        & &-\phi_j & 1+\phi_j^2 & -\phi_j\\
        &&& -\phi_j & 1
    \end{pmatrix}.
\end{displaymath}        
For the approximate
fGn model, we have $m$ such processes and their sum. Hence we need the
$(m+1)n\times (m+1)n$ precision matrix of the vector
\begin{equation}
(\tilde{\mm{x}}_m^T, \mm{z}^{(1)T},\ldots , \mm{z}^{(m)T}).\label{eq:MGMRF}
\end{equation}
To ensure a non-singular distribution, we will add a small Gaussian
noise term to the sum, i.e.\ we let
\begin{equation}
    \tilde{\mm{x}}_m=\sigma \left(\sum_{j=1}^m\sqrt{w_j} \mm{z}^{(j)}+\mm{\epsilon}\right),  
    \label{eq:optimgrunnlag1}
\end{equation}
where the precision of $\mm{\epsilon}$ is high, like $\kappa=\exp(15)$.
The (upper part of the) precision matrix is found as
\begin{equation*}
    \begin{pmatrix}
        \begin{tabular}{rrrrr}
          $ \kappa\mm{I}/\sigma^{2} $& $-\sqrt{w_1}  \kappa \mm{I}/\sigma $ & 
                                                                    $-\sqrt{w_2}  \kappa\mm{I}/\sigma $& $\ldots $&  $ -\sqrt{w_m}  \kappa\mm{I}/\sigma $\\
                           & $\mm{R}(\phi_1) + w_1 \kappa \mm{I} $  &$ \sqrt{w_1 w_2} \kappa \mm{I} $ & $\ldots$  & $ \sqrt{w_1 w_m} \kappa \mm{I}$ \\
                           &     & $ \mm{R}(\phi_2) + w_2 \kappa \mm{I} $ &    $\ddots  $  & $ \vdots $ \\
                           & & & $\ddots$ & $\sqrt{w_{m-1} w_m} \kappa \mm{I} $\\
                           &  & &  & $ \mm{R}(\phi_m) +w_m\kappa\mm{I} $\\
        \end{tabular}
    \end{pmatrix}.
    \label{eq:precision}
\end{equation*}
The non-zero structure is displayed in Figure~\ref{fig:matrices} (left
panel) for $m=3$ and $n=10$. Even though the matrix is sparse, a more
optimal structure can be achieved by grouping the $m+1$ variables
associated with each of the $n$ time points,
\begin{equation}
    \mm{v}=\left(\tilde{x}_{m1},z^{(1)}_1,\ldots ,z^{(m)}_1,\;
      \tilde{x}_{m2},z^{(1)}_2,\ldots ,z^{(m)}_2,\;\ldots, \;
      \tilde{x}_{mn},z^{(1)}_n,z^{(2)}_n,\ldots ,  z^{(m)}_n\right)^T.
    \label{eq:sparse}
\end{equation}
The benefit of this reordering is that the corresponding precision
matrix $\mm{Q}_v$ is a band matrix, see Figure~\ref{fig:matrices}
(middle panel). Doing the Cholesky decomposition,
$\mm{Q}_v=\mm{L}_v\mm{L}_v^T$, the lower triangular matrix
$\mm{L}_v$ will inherit the lower bandwidth of $\mm{Q}_v$ \citep[Thm.~4.3.1]{rue:01,golub:96}, see Figure~\ref{fig:matrices} (right panel). This
leads to the following key result concerning the computational cost of
the approximate model, with a trivial proof.
\begin{theorem}\label{teo:flops}
    The number of flops needed for Cholesky decomposition of $\mm{Q}_v$
    is $n(m+1)^3$. The memory requirement for the Cholesky triangle
    is $n(m+1)(m+2)$ reals.
\end{theorem}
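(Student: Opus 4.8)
The plan is to recognise $\mm{Q}_v$ as a symmetric positive definite band matrix and then read off both costs from the standard complexity of band Cholesky. Write $N=(m+1)n$ for the order of $\mm{Q}_v$ and let $p$ be its lower half-bandwidth, the largest $k$ for which some entry $k$ positions below the diagonal is nonzero. Since banded Cholesky of an order-$N$, half-bandwidth-$p$ matrix costs $\sim Np^2$ flops and its triangle stores $\sim N(p+1)$ band entries, the whole theorem reduces to the single claim that $p=m+1$.

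To pin down $p$, first I would read the couplings off the precision structure under the ordering \eqref{eq:sparse}. The $t$th block $(\tilde{x}_{mt},z^{(1)}_t,\ldots ,z^{(m)}_t)$ of $m+1$ consecutive components occupies global positions $(t-1)(m+1)+1,\ldots ,t(m+1)$, and there are exactly two sources of nonzeros. Within a block, the noise-perturbed sum \eqref{eq:optimgrunnlag1} ties $\tilde{x}_{mt}$ to every $z^{(j)}_t$ and, through the $\kappa$-terms, ties the $z^{(j)}_t$ to one another, so each diagonal block is full but only produces distances at most $m$. Across blocks, the sole couplings are those of the tridiagonal \AR1 precision $\mm{R}(\phi_j)$ of \eqref{eq:ar1}, which links $z^{(j)}_t$ to $z^{(j)}_{t\pm1}$ for each $j$ and nothing else; each such pair sits at positions differing by exactly $m+1$. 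Therefore no nonzero entry is farther than $m+1$ from the diagonal and this distance is attained, giving $p=m+1$.

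With $p=m+1$ the two assertions are immediate substitutions. The flop count is $\sim Np^2=(m+1)n\,(m+1)^2=n(m+1)^3$, the stated leading-order cost of band Cholesky \citep[Ch.~4]{golub:96}. For the memory, $\mm{L}_v$ inherits the lower bandwidth $p$ of $\mm{Q}_v$ \citep[Thm.~4.3.1]{golub:96}, so each of its $N$ columns holds at most $p+1$ entries within the band, for a total of $N(p+1)=(m+1)n\,(m+2)=n(m+1)(m+2)$ reals.

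The only place real care is needed, and hence the main obstacle, is the bandwidth bookkeeping: a naive block-tridiagonal estimate with full $(m+1)\times(m+1)$ off-diagonal blocks would give half-bandwidth $2m+1$, and one must use that the off-diagonal blocks are in fact diagonal (the \AR1 processes being mutually independent) to bring $p$ down to $m+1$. Once that is verified, everything else is substitution into textbook formulas, the displayed expressions being the leading terms while the $O((m+1)^3)$ corner corrections are independent of $n$ and negligible; this is what makes the proof essentially a counting exercise.
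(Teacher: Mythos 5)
Your proof is correct and follows essentially the same route as the paper's: identify $\mm{Q}_v$ as a band matrix of dimension $n(m+1)$ with bandwidth $m+1$ and substitute into the standard band-Cholesky flop and storage counts from \citet[Section~4.3.5]{golub:96}. The only difference is that you explicitly verify the bandwidth claim $p=m+1$ (including why the off-diagonal blocks being diagonal rules out the naive $2m+1$ bound), a step the paper leaves to the construction and to the figure displaying the band structure.
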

\begin{proof}
    $\mm{Q}_v$ is a band matrix with dimension $d=n(m+1)$ and
    bandwidth $b=m+1$. The computational cost of the Cholesky
    factorisation, $\mm{Q}_v=\mm{L}_v\mm{L}_v^T$ is $db^2=n(m+1)^3$
    and the memory requirement needed is $d(b+1)=n(m+1)(m+2)$
    \citep[Section~4.3.5]{golub:96}.
\end{proof}

\begin{figure}
    \begin{center}
        \includegraphics[width=0.3\textwidth]{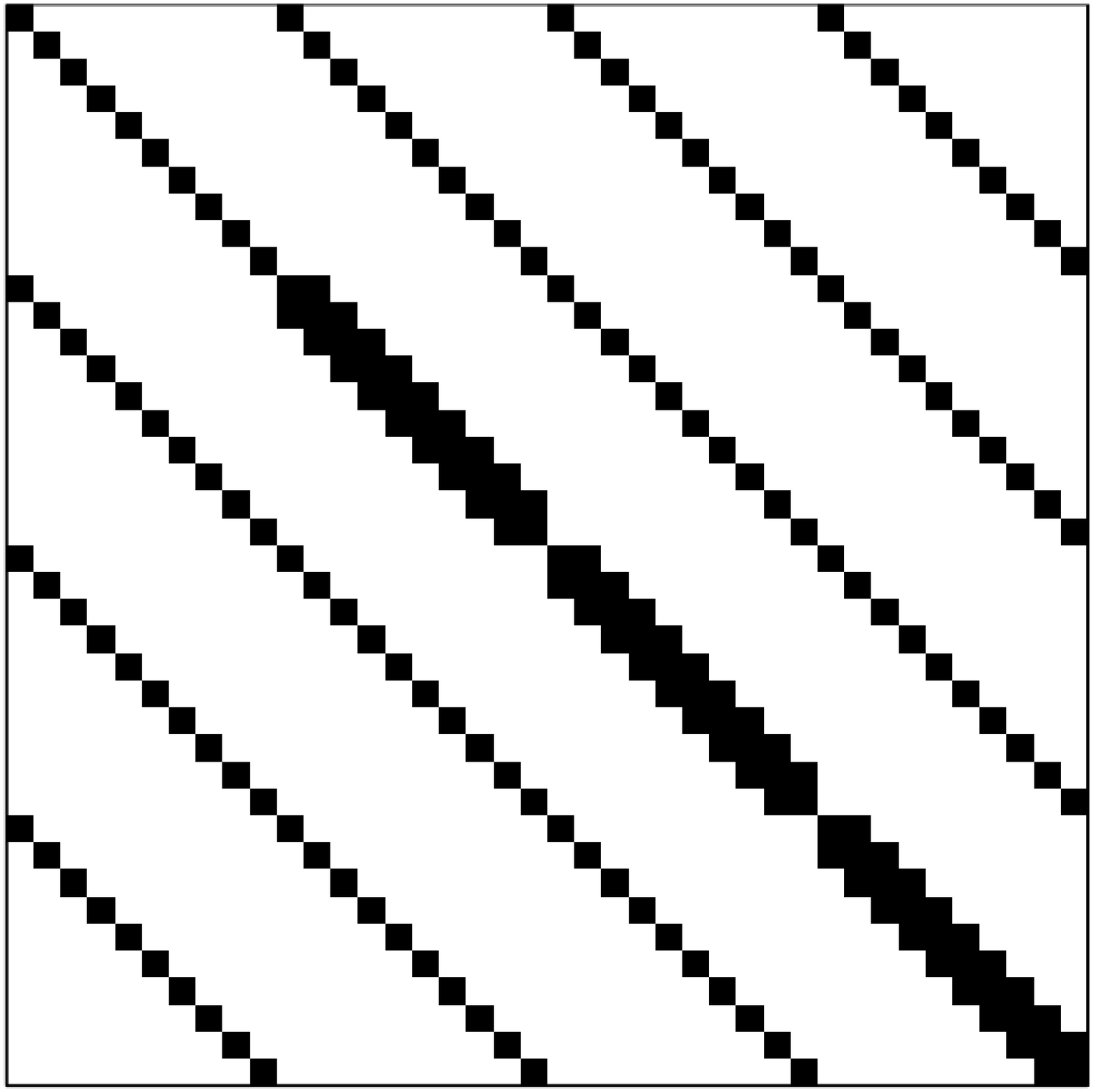}\hspace{0.5cm}
        \includegraphics[width=0.3\textwidth]{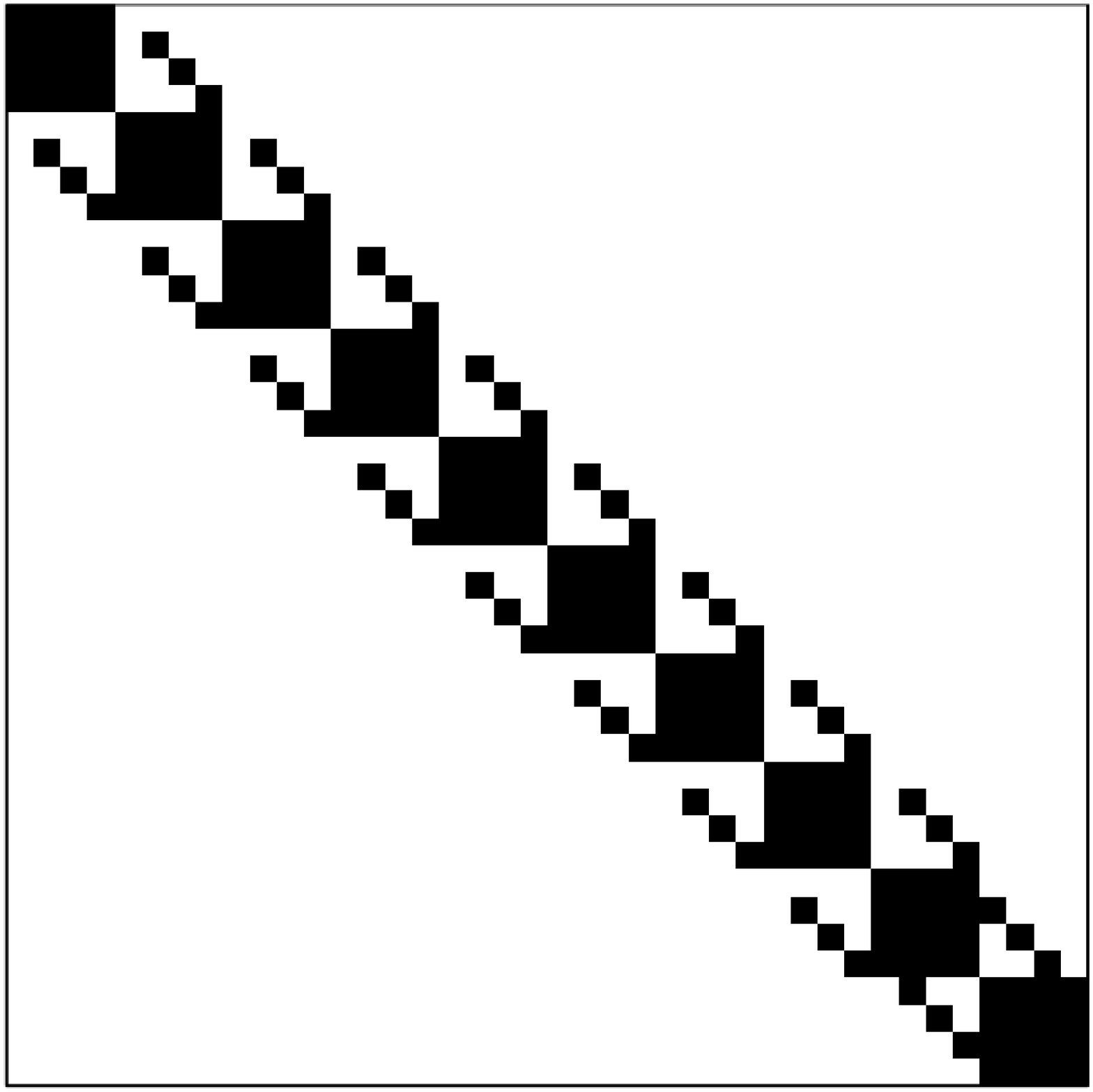}\hspace{0.5cm}
        \includegraphics[width=0.3\textwidth]{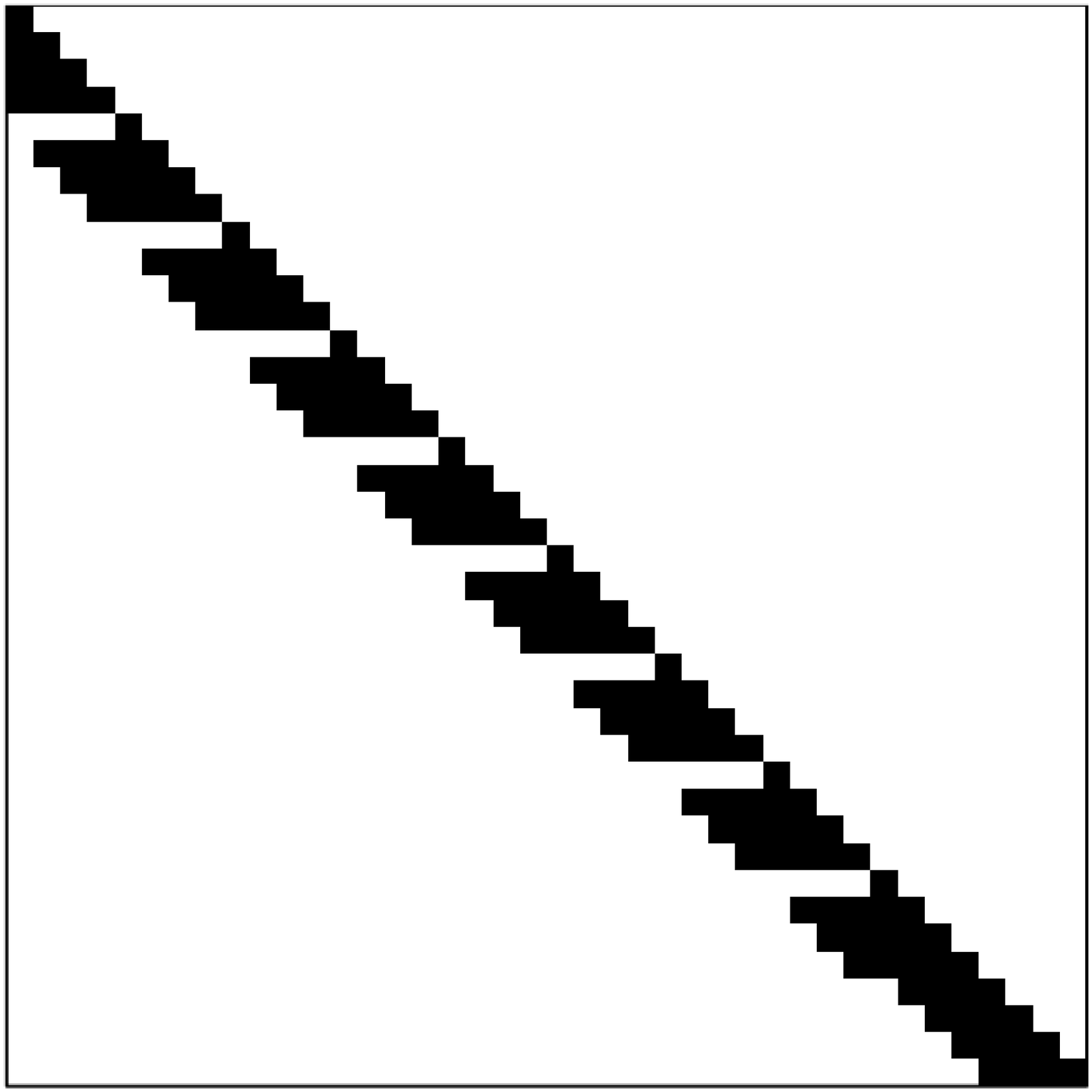}
        \caption{Left panel: The structure of the precision matrix of the vector  
            \eqref{eq:MGMRF}. Middle panel: The structure of the precision matrix of the reordered vector \eqref{eq:sparse}. Right panel: The
            resulting structure of the lower triangular matrix in the
            Cholesky decomposition. The matrices are illustrated for the case 
            $m=3$  and $n=10$.}
        \label{fig:matrices}
    \end{center}
\end{figure}

The computational cost and memory requirement of the Cholesky
decomposition do not change if the approximate fGn model is observed
indirectly, like in the regression model \eqref{eq:regr}. 
Notice that it is also possible to construct an approximation using the cumulative sums of
$\sigma\sum_{j=1}^m\sqrt{w_j} \mm{z}^{(j)}$ to form a sparse $mn\times mn$
precision matrix, with the same bandwidth.  Obviously, this approach 
gives computational 
savings but it does not 
allow for automatic source separation in situations where the fGn can be seen to represent 
combined signals. This feature of the approximate model is demonstrated in Section~\ref{sec:nile}.

\subsection{Choosing the number of AR(1) components in the approximation}
\label{sec:accuracy}

The choice of $m$ in \eqref{eq:optimgrunnlag1} reflects a trade-off
between computational efficiency and approximation error. This implies that 
$m$ should be as small as possible but still large enough to give a reasonably accurate approximation of 
the autocorrelation function of fGn.  Figure~\ref{fig:acf} illustrates the autocorrelation function
of fGn compared with the approximate model when $m=3$ and $m=4$, using $k_{\max}=1000$ in \eqref{eq3:optimprosedyre}. 
We only show results for $H=0.9$ as the differences between the curves will be less visible 
using smaller values of $H$. 

\begin{figure}[ht]
    \begin{center}
    \includegraphics[width=0.4\textwidth]{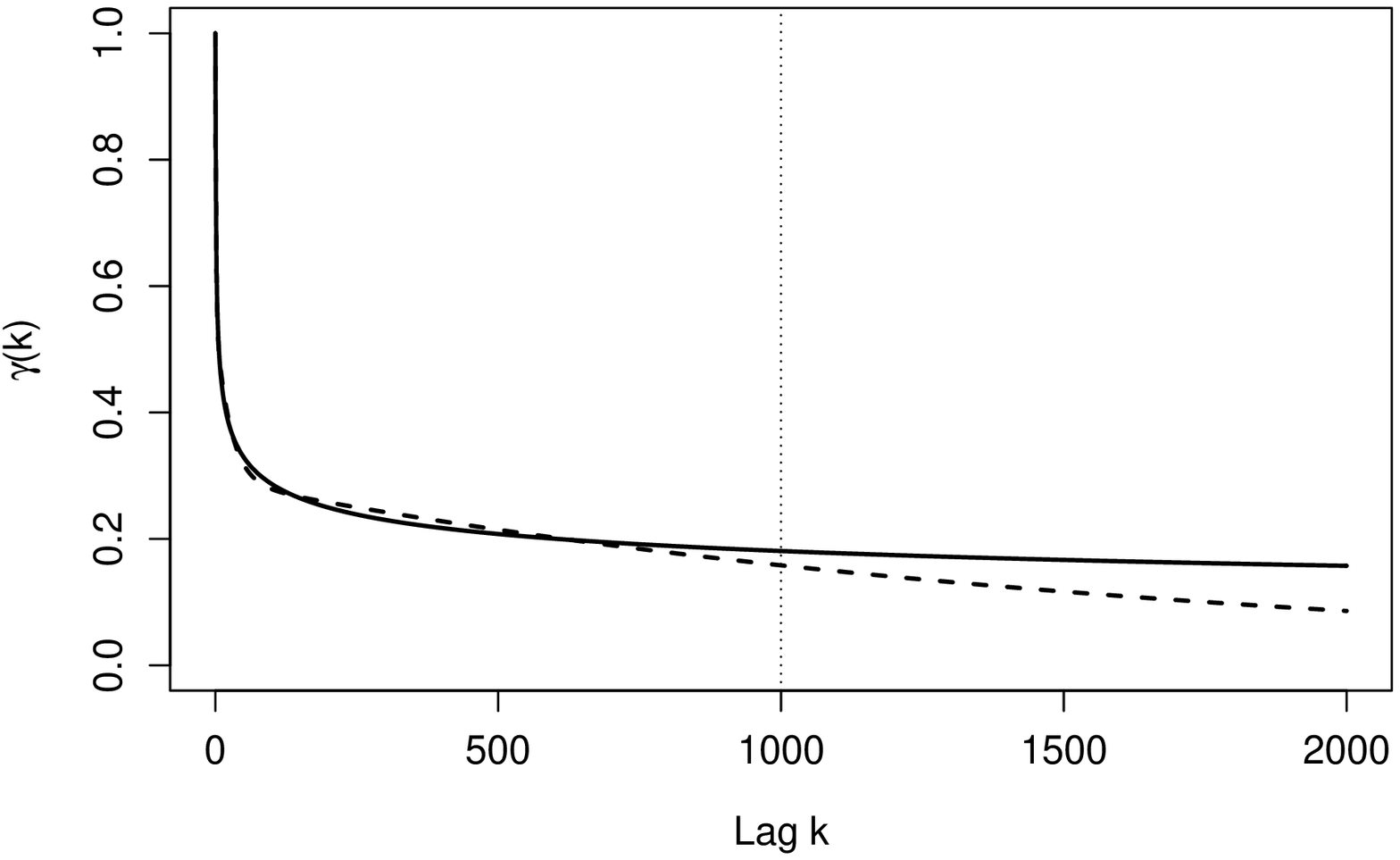}
    \hspace{0.5cm}
    \includegraphics[width=0.4\textwidth]{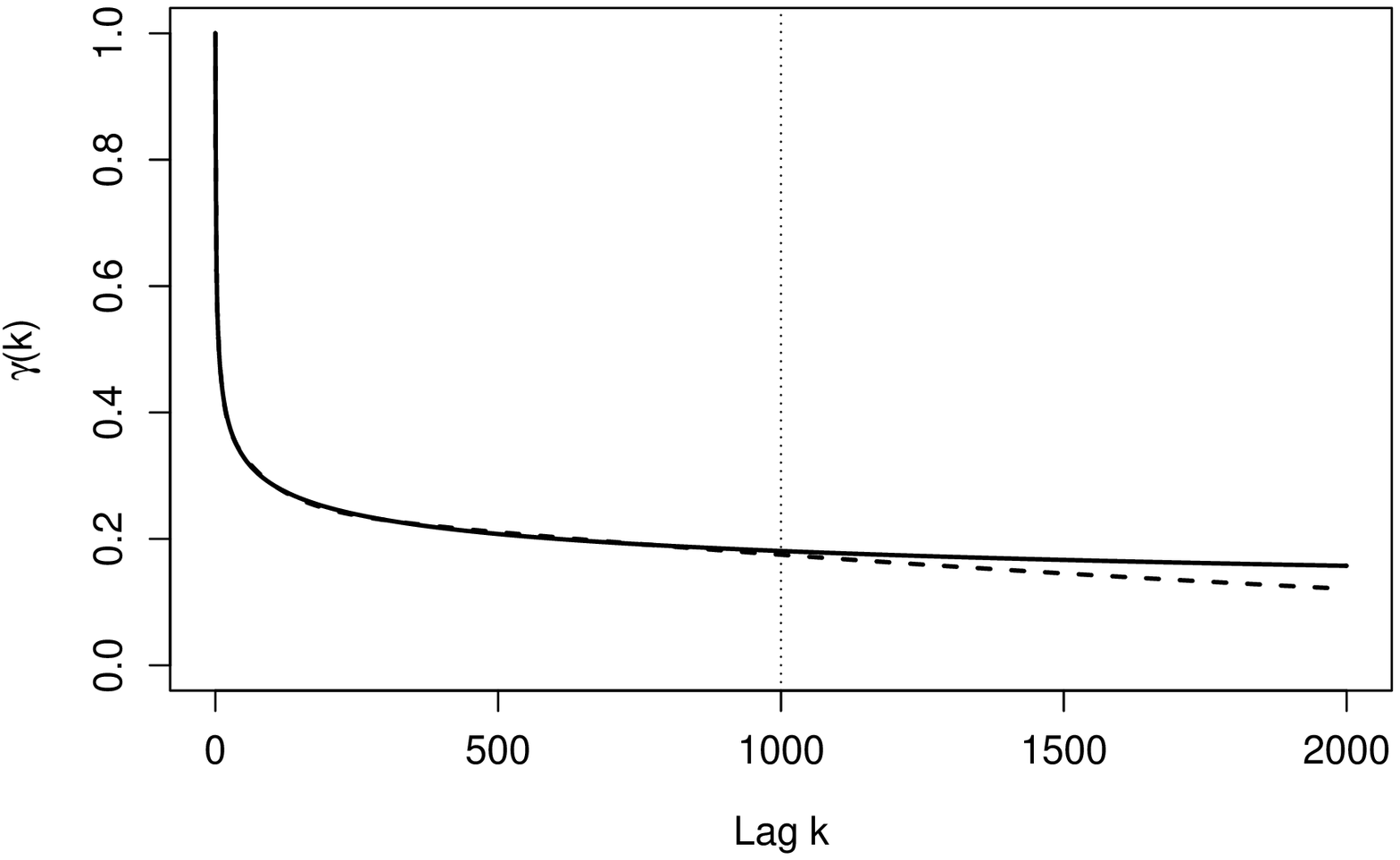}
      \caption{Left panel: The exact autocorrelation function (solid) when $H=0.9$, versus the autocorrelation function of the approximate model  (dashed), using $m=3$ and $k_{max}=1000$. Right panel: Similar using $m=4$. }
    \label{fig:acf}
    \end{center}
\end{figure}

We do notice that $m=4$ gives an almost
perfect match of the autocorrelation function up to
$k_{\max}$. For larger lags, the autocorrelation
function of the approximate fGn model will have an exponential decay,
hence we cannot match the hyperbolic decay of the exact
fGn. As a consequence, $k_{max}$ can be seen as a
trade-off between having a good fit for the first part of the
autocorrelation function versus tail behaviour. 

\begin{figure}[htb]
    \begin{center}
    \includegraphics[width=0.4\textwidth,angle=0]{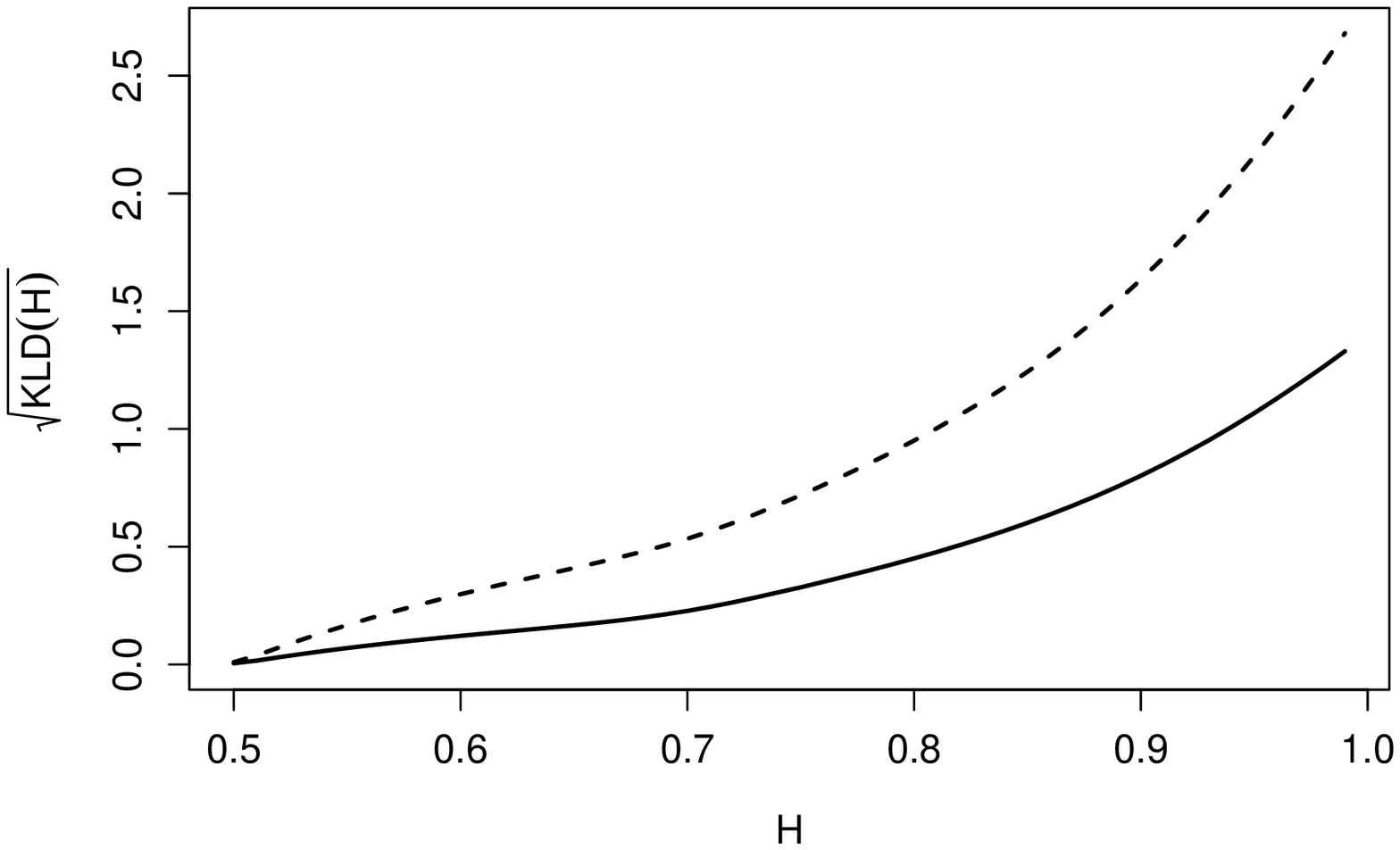}
        \hspace{0.5cm}
    \includegraphics[width=0.4\textwidth,angle=0]{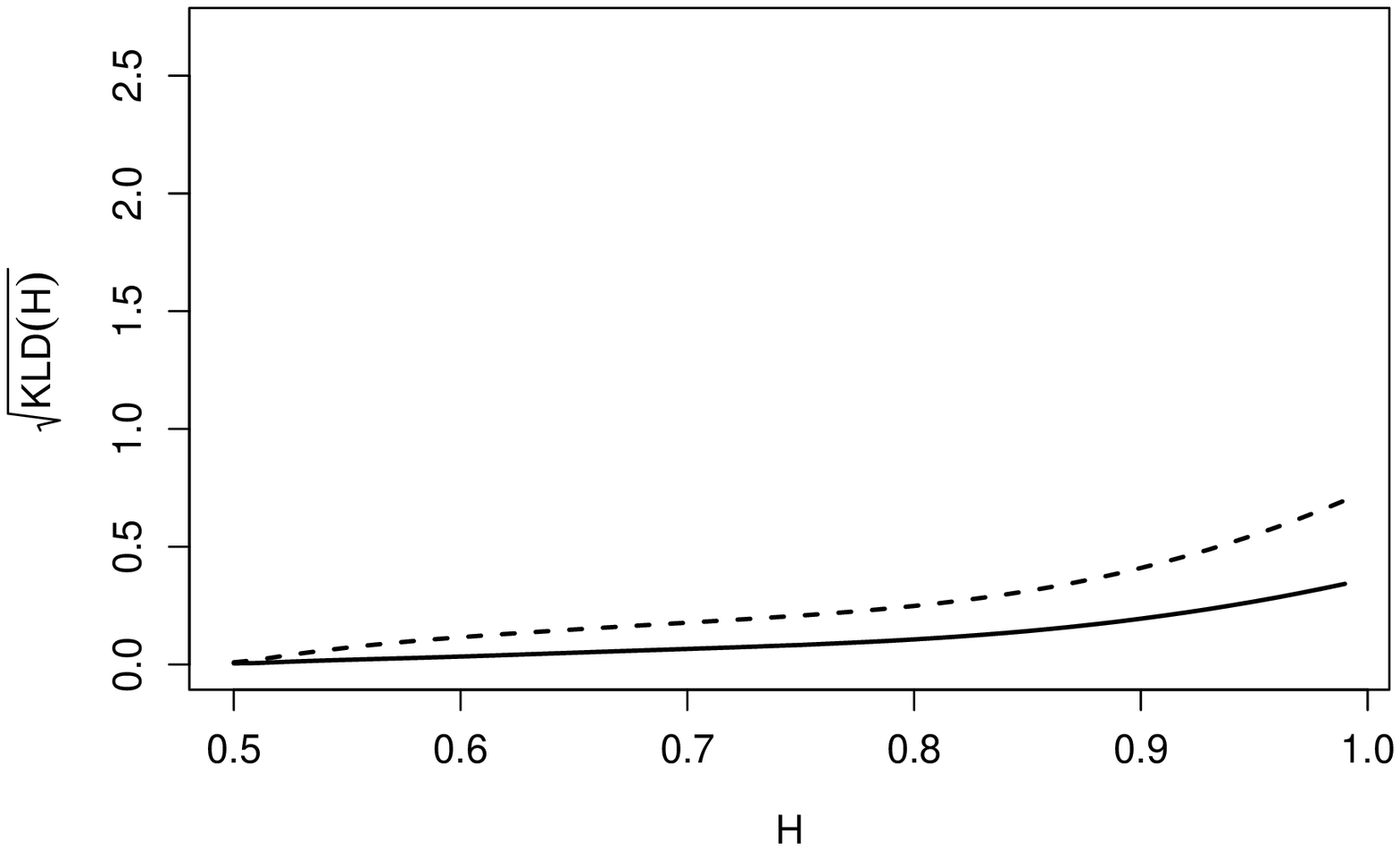}
    \caption{Left panel: The square-root of the Kullback-Leibler divergence as a
        function of $H$ for time series of length $n=500$ (solid)
        and $n=2000$ (dashed), using the approximate fGn model with $m=3$. 
       Right panel: Similar using 
        $m=4$. 
    }
    \label{fig:kld}
    \end{center}
\end{figure}

A different way to illustrate the difference between the approximate
and exact fGn models is in terms of the Kullback-Leibler divergence.
 This is a measure of complexity between probability
distributions, which here measures the information lost when
the approximate fGn model is used instead of the exact fGn model.
Figure~\ref{fig:kld} displays the square-root of the Kullback-Leibler
divergence for $n=500$ and $n=2000$, as a function of $H$. We notice that
$m=4$ clearly gives an improvement over $m=3$, in particular for larger
values of $H$. The loss in information when $n=2000$ compared to
$n=500$ is small, despite the fact that the autocorrelation function
is fitted only up to lag $k_{max}=1000$.

\section{Simulation results}\label{sec:results}

To evaluate the properties of the approximate fGn model, we now study the loss of accuracy when using the
approximate versus the exact fGn model, for estimation and prediction.
The results will demonstrate an impressive performance for
both the estimation and prediction exercises using the approximate fGn
model with $m=4$.

\subsection{Maximum likelihood estimation of $H$}

We first study the  loss of accuracy using the approximate versus the exact fGn model
 in maximum likelihood estimation of $H$. We fit the approximate model  using $m=3$ and $m=4$
 to simulated fGn series of
length $n=500$, with $N=1000$ replications. The error is evaluated in terms of the root mean squared error (RMSE)
and the mean absolute error (MAE) of $\tilde H_i-\hat H_i$, where
$\tilde H_i$ and $\hat H_i$ denote the estimates using the approximate
versus the exact fGn, for the $i$th replication. 

The results are
summarised in Table~\ref{tab:H} in which the true Hurst exponent ranges from 0.60 to 0.95. 
Using $m=3$, the Hurst
exponent is underestimated and the error is seen to increase with $H$, at least up to 0.90. 
The situation really improves
for $m=4$, in which  the error is small for all values of $H$. The
standard deviation estimates found from the empirical Fisher
information are more similar than the estimates themselves (results
not shown).

\begin{table}[ht!]
    \begin{center}
        \begin{tabular}{| l | rrr | rr | rr |}\hline
          & \multicolumn{3}{c|}{Average MLE of $H$} &\multicolumn{2}{c|}{$\mbox{RMSE}(\tilde H)$}  & \multicolumn{2}{c|}{$\mbox{MAE}(\tilde H)$} \\
          $ H $ &Exact &$ m=3$ & $m=4$ &  $m=3$ & $m=4$ & $m=3$ & $m=4$ \\\hline
          0.60 & 0.5998 & 0.5998 & 0.5998 & 0.0019 & 0.0007 & 0.0015 & 0.0006\\
           0.65 & 0.6481 & 0.6478 & 0.6480 & 0.0026 & 0.0008 & 0.0021
           & 0.0006\\
          0.70 & 0.7004 & 0.6997 & 0.7003 &   0.0033 &0.0008  &  0.0026  & 0.0006  \\
           0.75 & 0.7488 & 0.7472 & 0.7487 & 0.0032 & 0.0007 & 0.0025
           & 0.0006\\
          0.80 & 0.7998 & 0.7974 & 0.7996 &   0.0031 & 0.0006 &  0.0026 & 0.0005  \\
           0.85 & 0.8503 & 0.8471 & 0.8500 & 0.0035 & 0.0004 & 0.0032
           & 0.0004\\
          0.90 & 0.8999 & 0.8965 & 0.8997 &   0.0035 & 0.0003 &  0.0034  & 0.0003\\
           0.95 & 0.9500 & 0.9475 & 0.9499 & 0.0025 & 0.0002 & 0.0025
           & 0.0001
          \\\hline
        \end{tabular}
        \caption{The average of the maximum likelihood estimates of
            $H$, the root mean squared error and the absolute mean
            error using the exact versus the approximate models with $m=3$
            and $m=4$. The generated fGn processes are of length $n=500$ with $N=1000$ replications.}
        \label{tab:H}
    \end{center}
\end{table}
 
Figure~\ref{fig:mle} displays scatterplots of the  maximum likelihood 
estimates for the approximate model with $m=3$ and $4$, versus the estimates using the exact model,  
when $H=0.7$, $0.8$ and $0.9$.  The
inaccuracy for $m=3$ is clearly visible and increases with increasing values of 
$H$, while $m=4$ shows very good performance.  We have noticed that the same general remarks also
hold when we increase the length of the series to $n=2000$. 
The series then contain more information about $H$, and 
the error due to using $k_{max}=1000$ is negligible. In conclusion, we do get a very low loss of accuracy using
the approximate model  with $m=4$. This  is impressive, 
especially as it applies for all reasonable values of $H$ in the long memory range. 

\begin{figure}[ht]
    \begin{center}
    \includegraphics*[width=0.32\textwidth,angle=0]{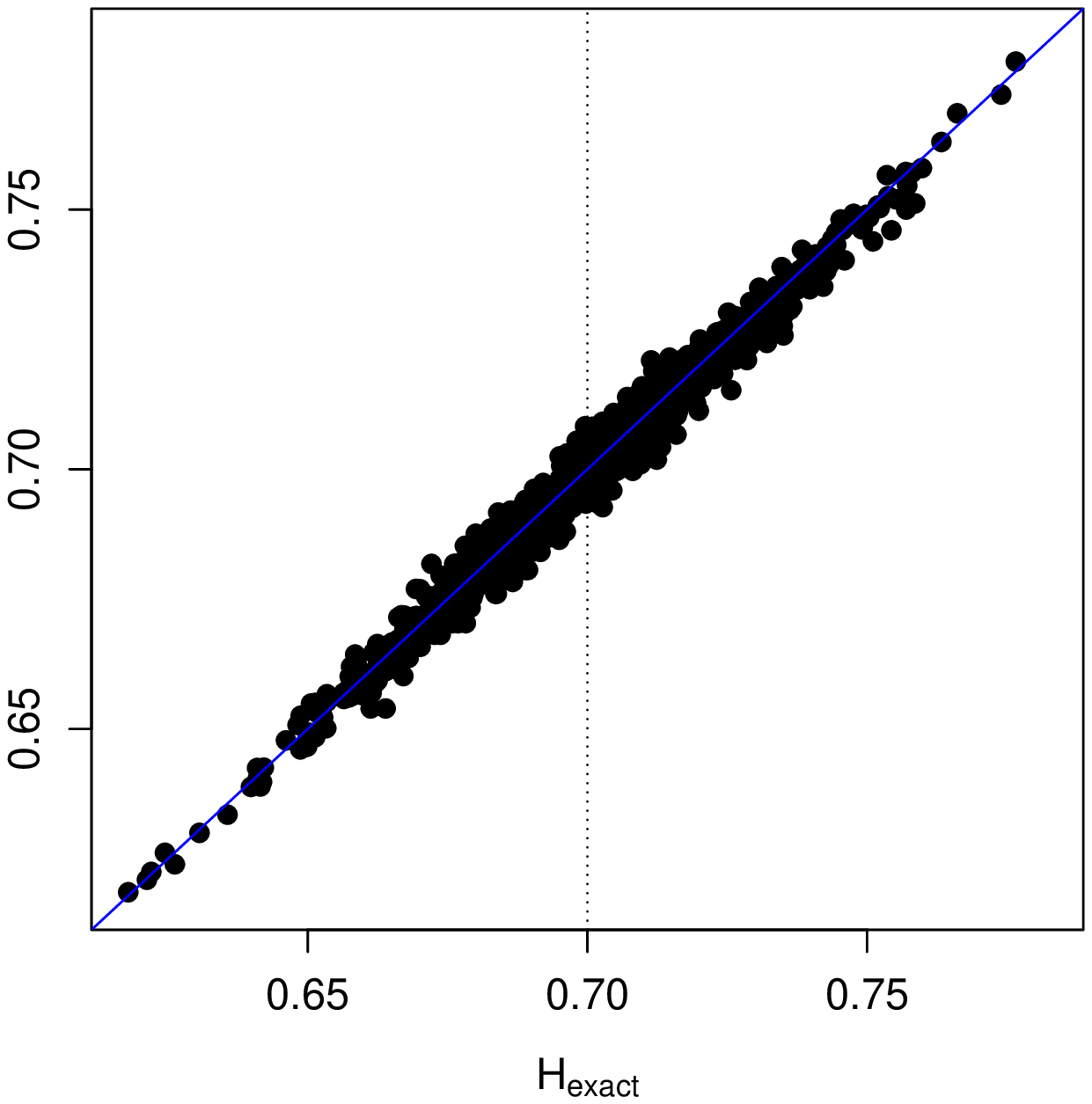}
    \includegraphics*[width=0.32\textwidth,angle=0]{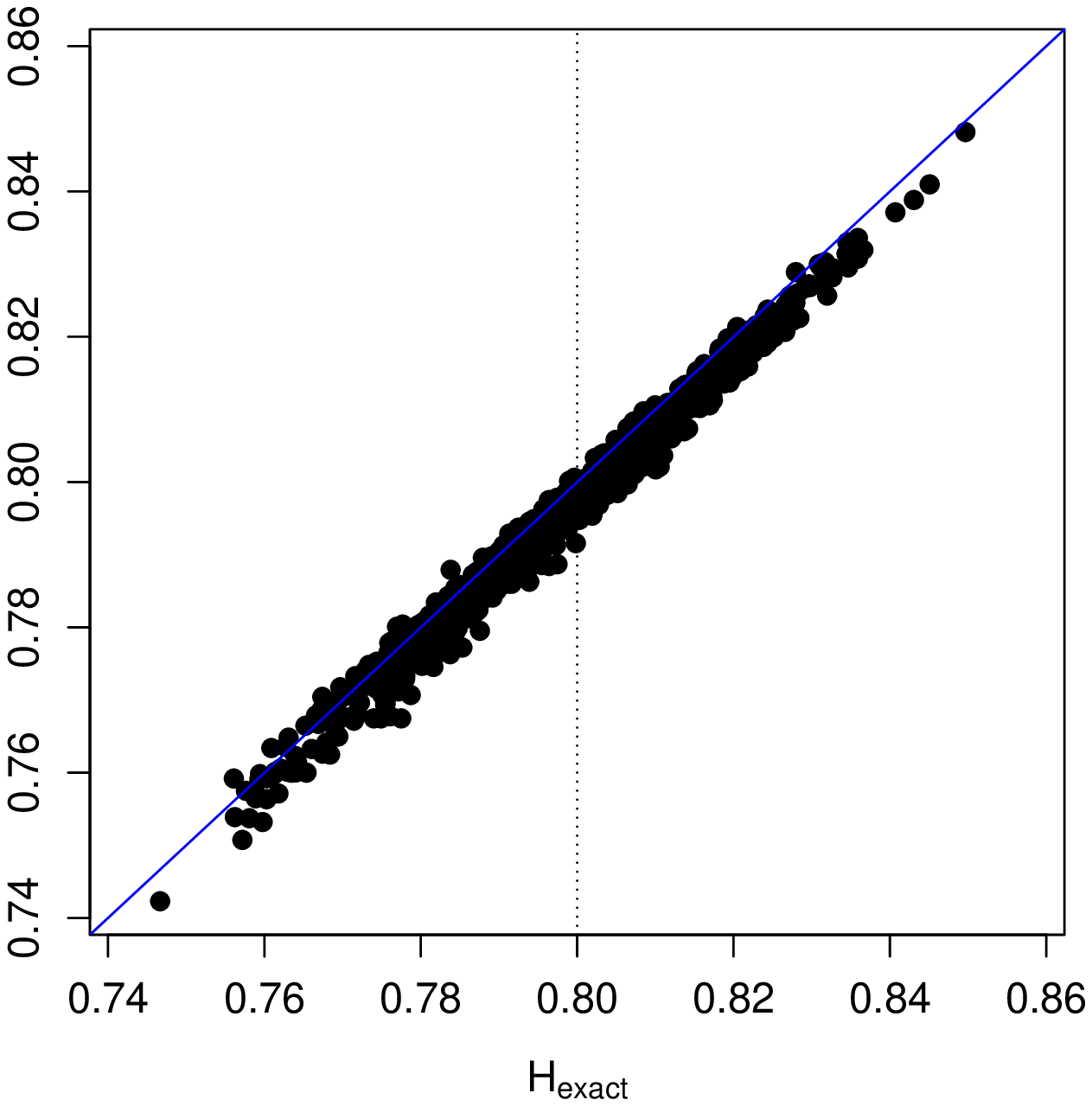} 
    \includegraphics*[width=0.32\textwidth,angle=0]{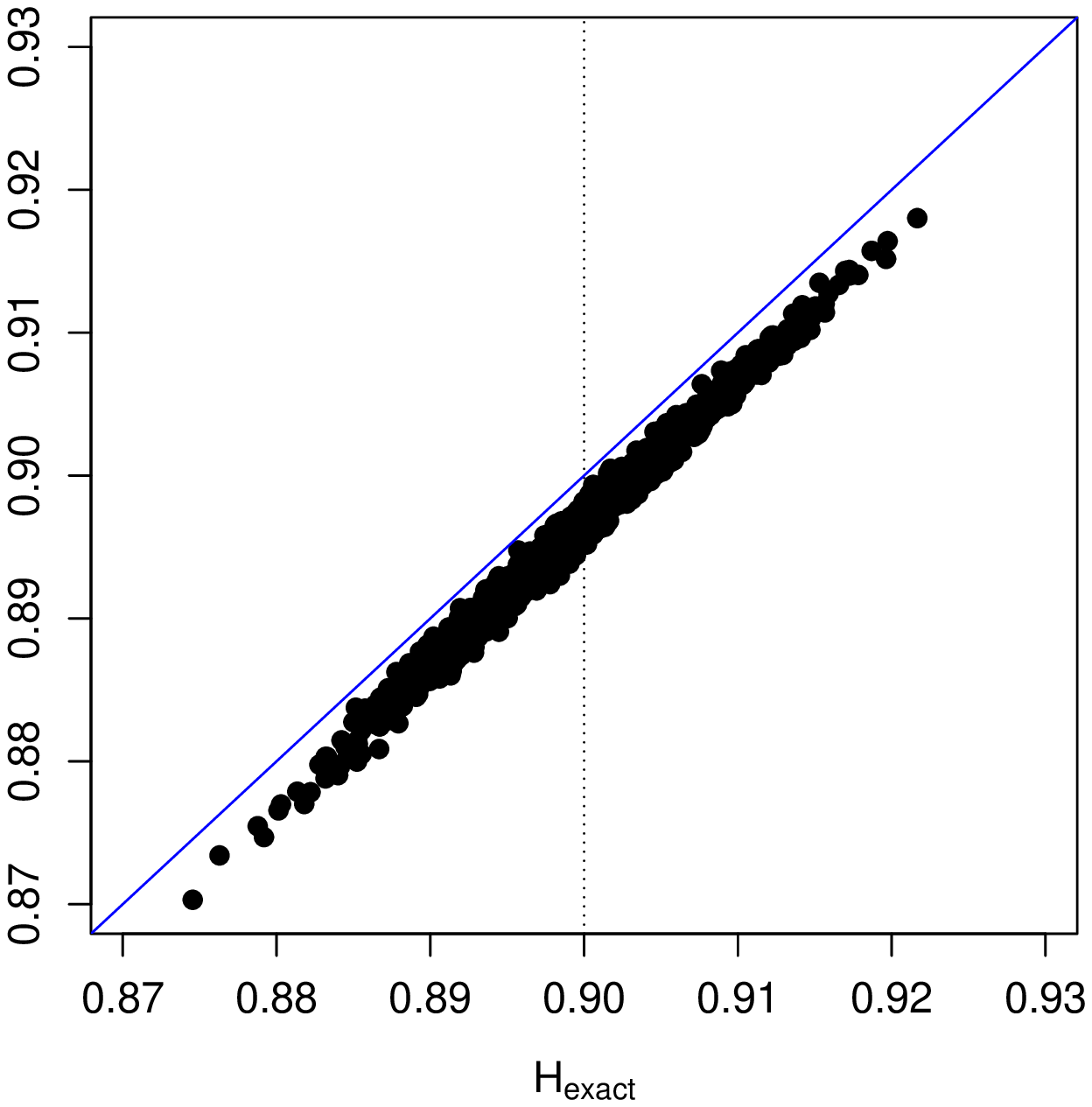}
    \includegraphics*[width=0.32\textwidth,angle=0]{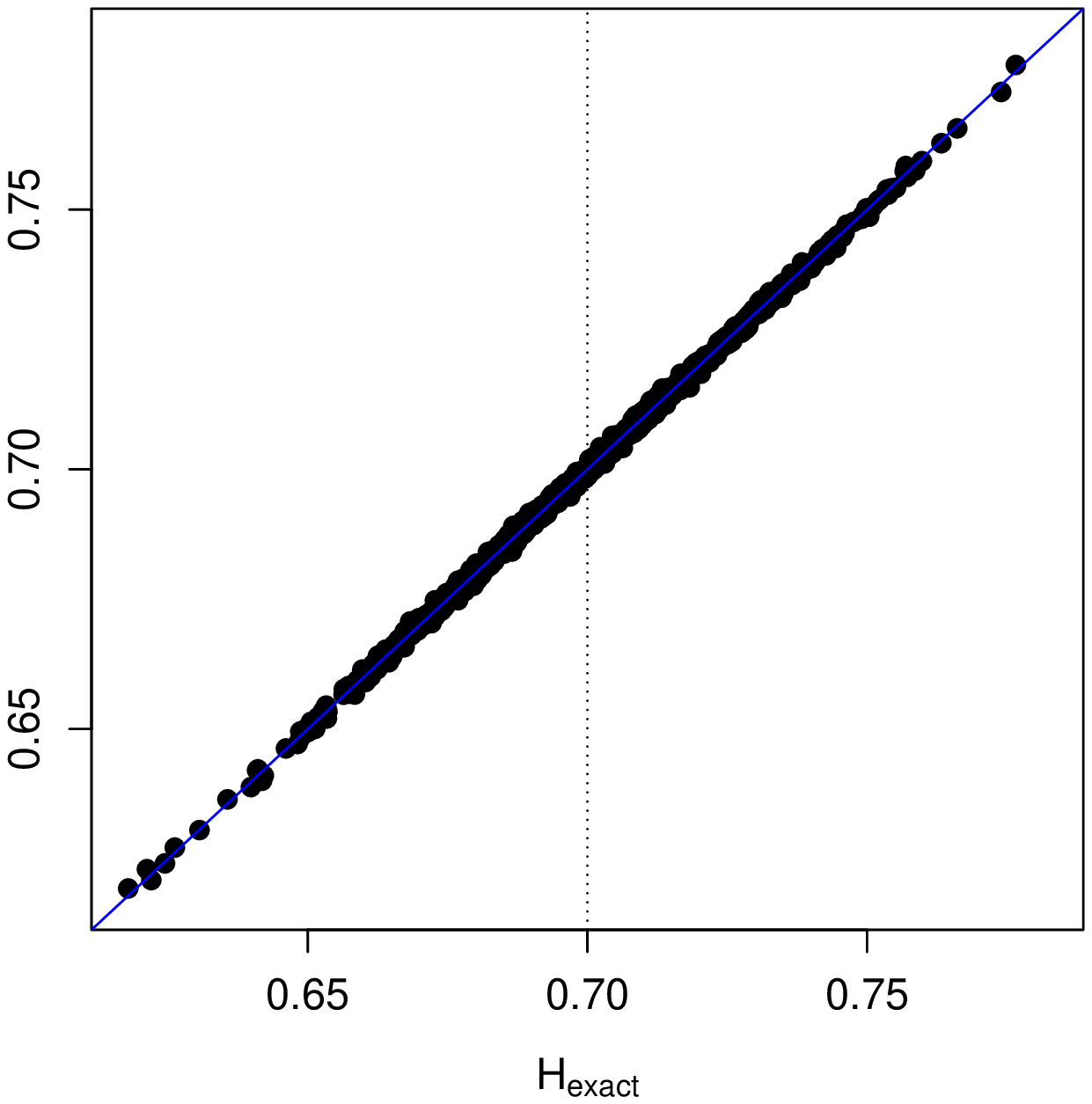}
    \includegraphics*[width=0.32\textwidth,angle=0]{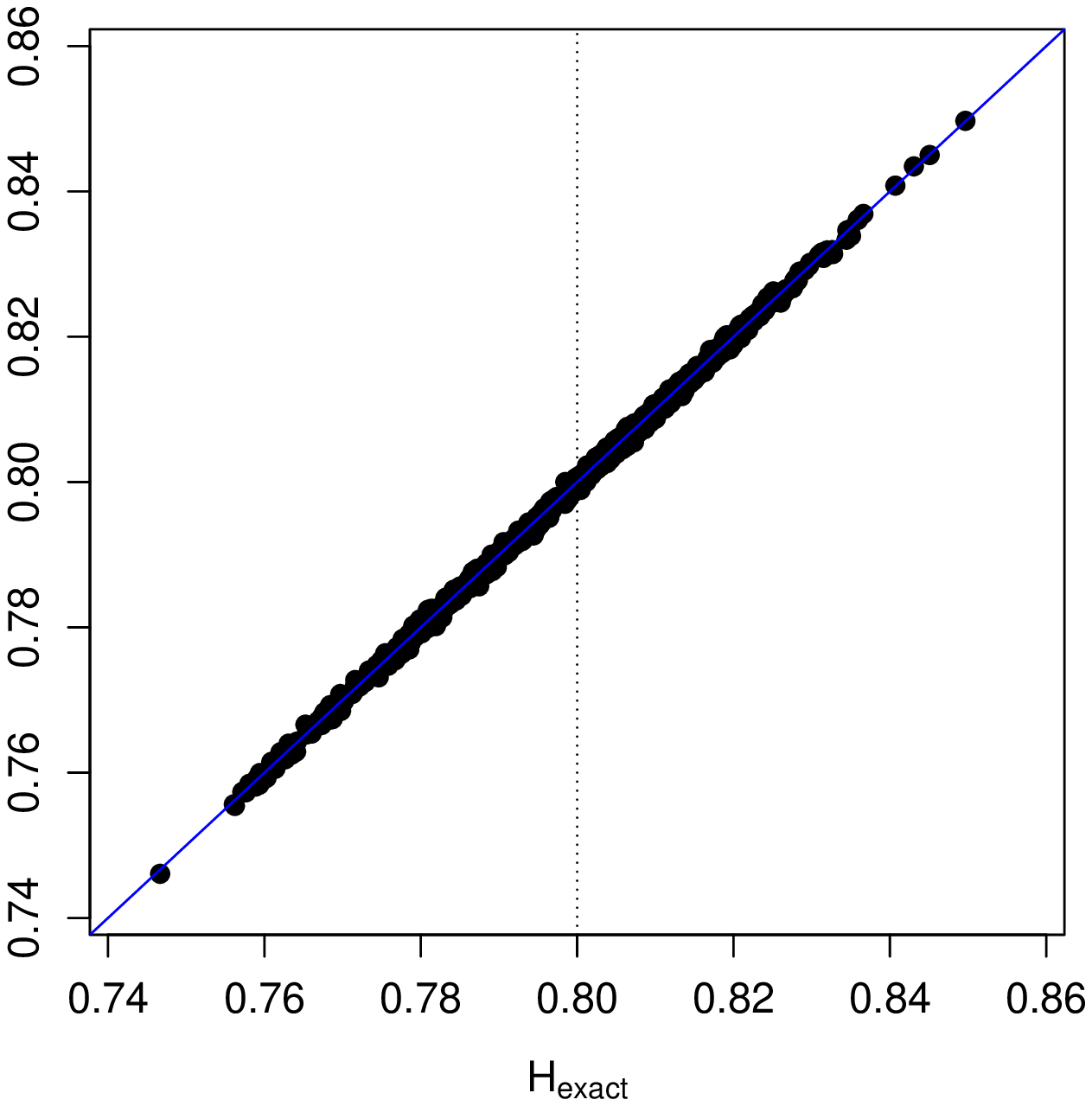}
    \includegraphics*[width=0.32\textwidth,angle=0]{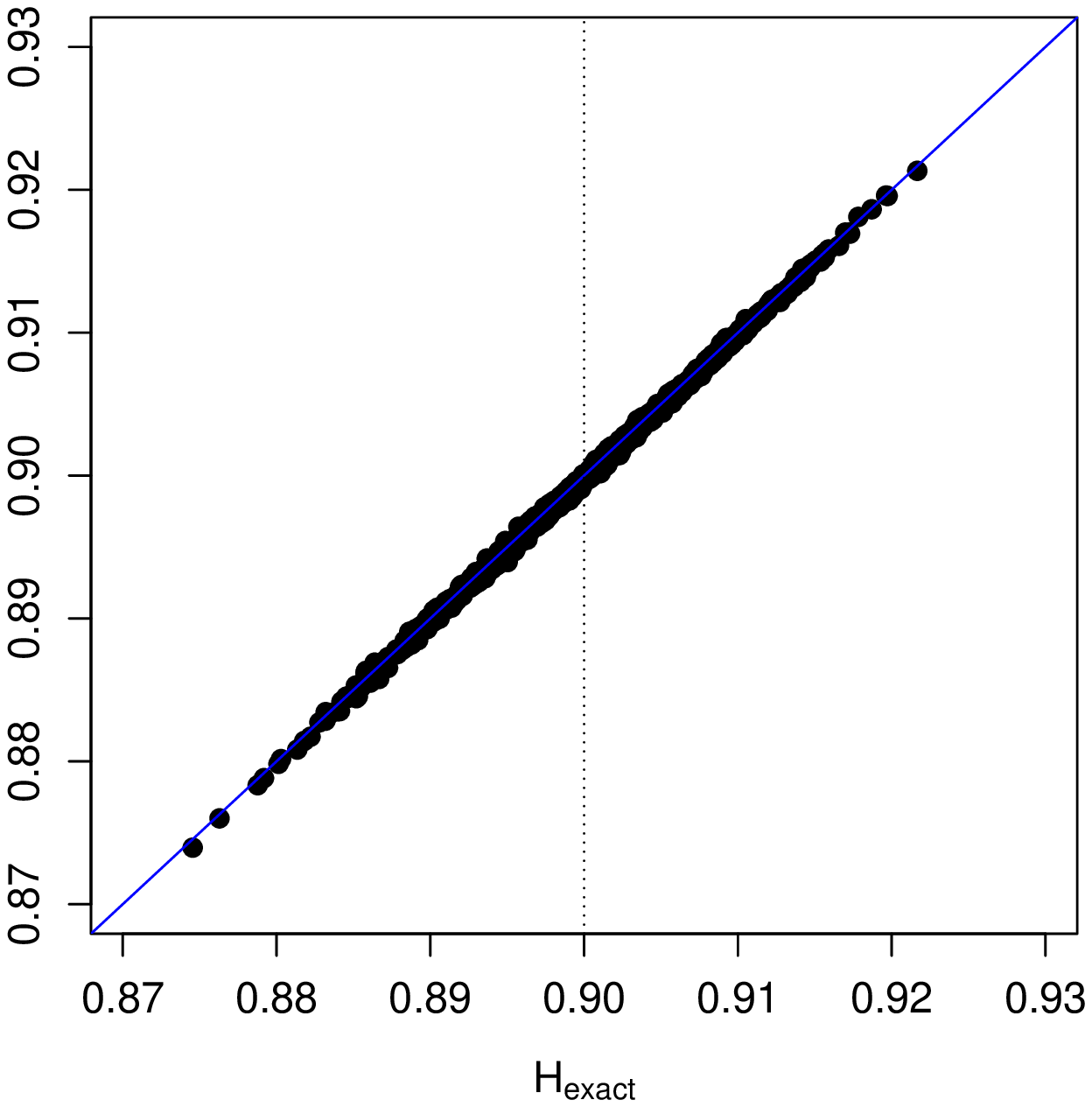}
    \caption{Upper panels: The maximum likelihood estimates of $H$ for  $N=1000$
        replications using the approximate
        fGn model with $m=3$ versus the exact fGn (labelled $H_{\mbox{\scriptsize{exact}}}$).   The true H-values are $H=0.7$ (left), $H=0.8$ (middle) and $H=0.9$ (right) 
        and the generated series have length $n=500$. 
        Lower panels: Similar using $m=4$.
       }
    \label{fig:mle}
    \end{center}
\end{figure}

\subsection{Predictive properties}

This section investigates the effect of the approximation error
when we observe an fGn process of length $n$ with fixed $H$, and then want to predict
future time points. The approximate model is implemented with $m=4$. To evaluate
the properties of the predictions, we consider the empirical mean
of the standardised absolute prediction error, 
\begin{equation}
    \text{err}_{\mu}(p) = \frac{1}{N}\sum_{i=1}^{N} 
    \frac{|\tilde{\mu}_{p,i} -{\mu}_{p,i}|}{\sigma_p},
    \label{eq:mean-pred}
\end{equation}
where $N$ is the number of replications. $\tilde{\mu}_{p,i}$ is the conditional expectation for $p$ time
points ahead from the $i$th replication using the approximate fGn
model. Correspondingly, ${\mu}_{p,i}$ is the conditional expectation using the
exact fGn model while $\sigma_p$ is the conditional standard deviation. 
To measure the error in the conditional standard deviation, we use
\begin{equation}
    \text{err}_{\sigma}(p) = \frac{\tilde{\sigma}_p}{\sigma_p} -1,
    \label{eq:sd-pred}
\end{equation}
which does not depend on the replication.

\begin{figure}[ht]
    \begin{center}
    \includegraphics[width=0.4\textwidth,angle=0]{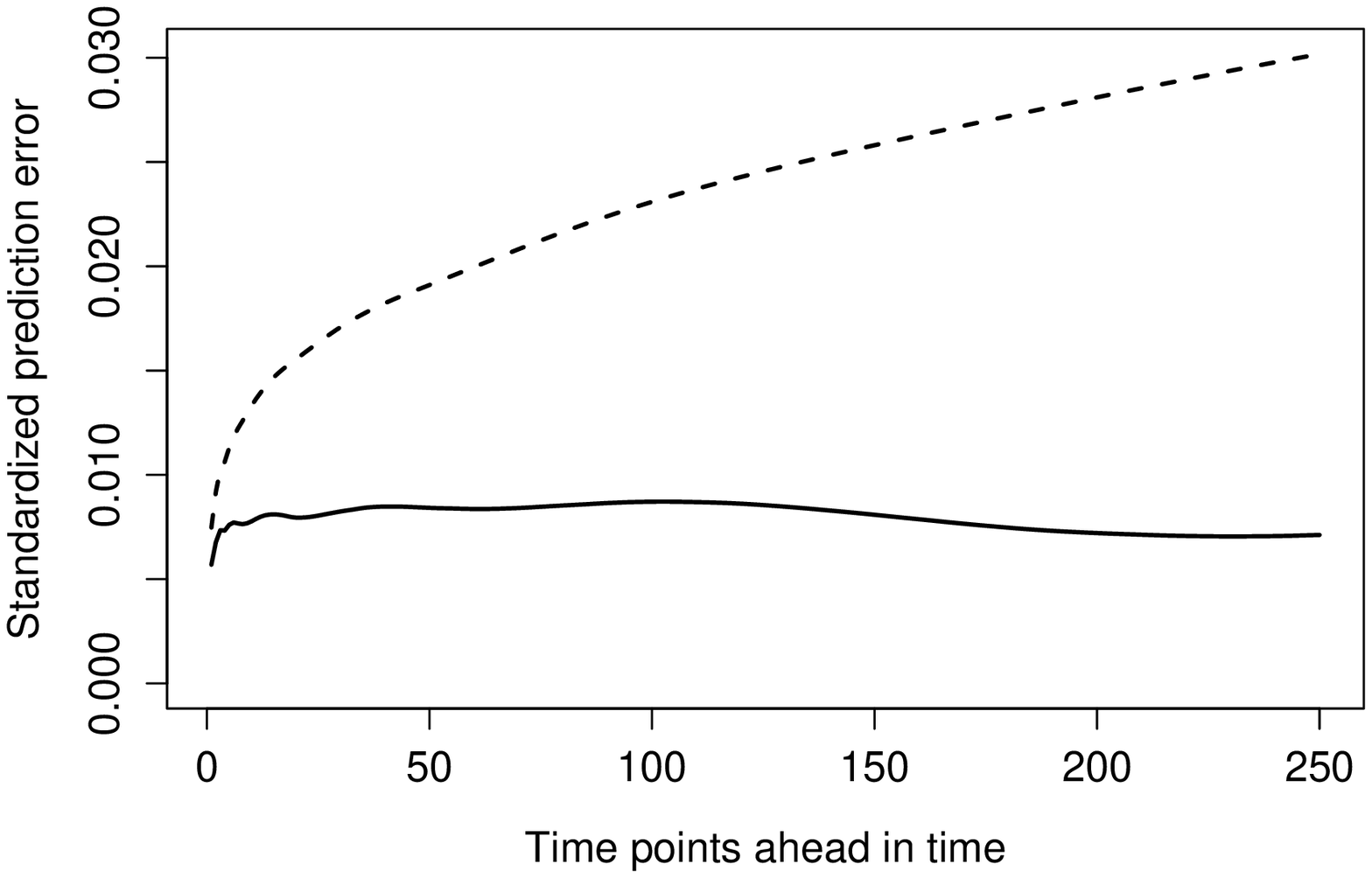}\hspace{0.5cm}
    \includegraphics[width=0.4\textwidth,angle=0]{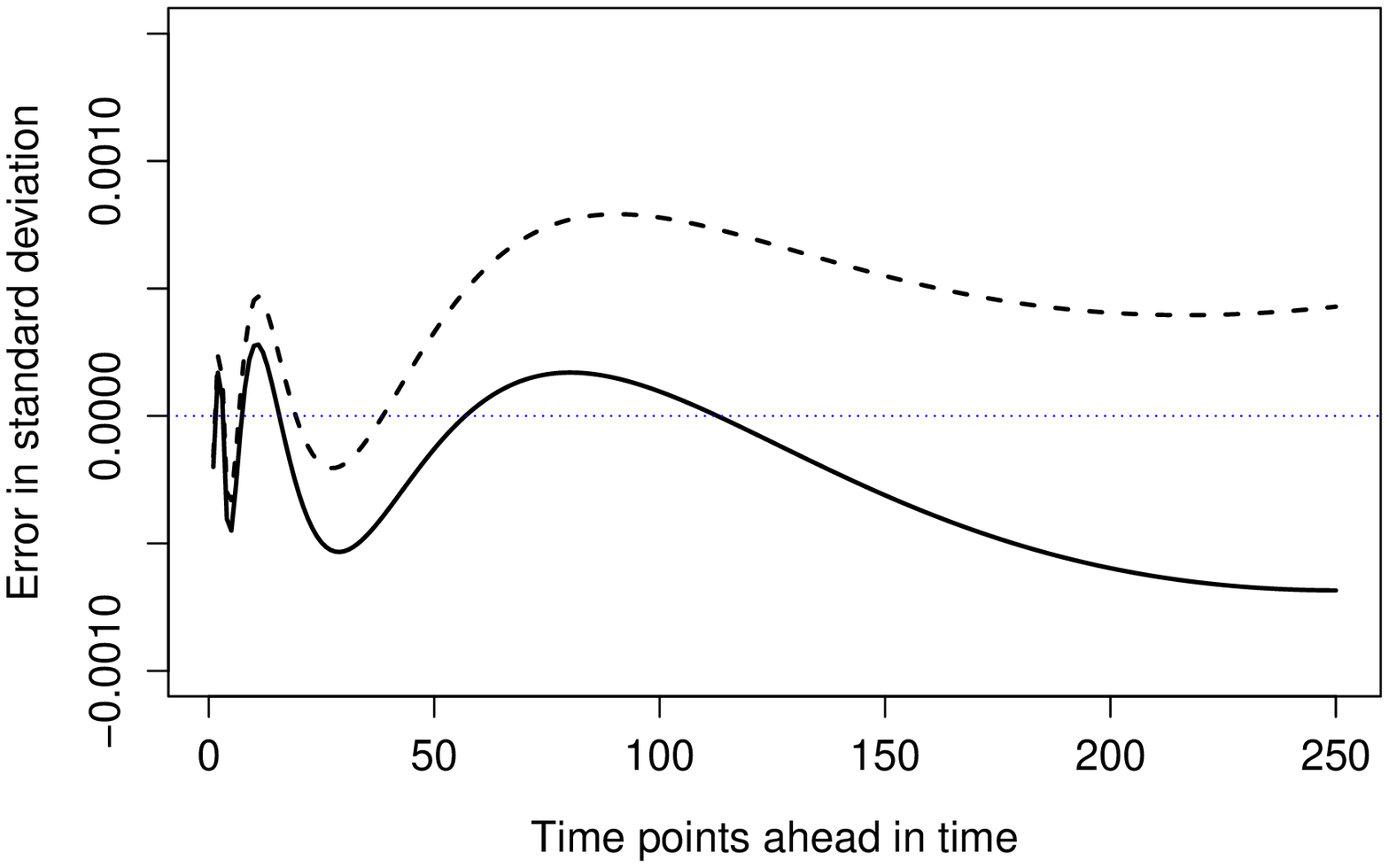}
    \caption{Left panel: The prediction error for the mean
        \eqref{eq:mean-pred}, predicting up to $250$ points ahead,  
        when $H=0.8$.
        The observed time series is either of length $n=500$
        (solid) or $n=2000$ (dashed).  Right panel: The
        similar error in the conditional standard deviation
        \eqref{eq:sd-pred}.
    }
    \label{fig:prederr}
    \end{center}
\end{figure}
 
The left panel of Figure~\ref{fig:prederr} illustrates the empirical
prediction error in \eqref{eq:mean-pred} for $p=1, \ldots, 250$
time-points ahead, following either $n=500$ or $n=2000$ observations.
The right panel shows the corresponding error in the prediction
standard deviation~\eqref{eq:sd-pred}. 
We only report results for $H=0.8$ as other values of the Hurst exponent give similar results.
We notice that the mean prediction error increases slightly when $n=2000$ compared
to $n=500$, which is explained by the increased error for lags
larger than $k_{\max}=1000$. Otherwise, both errors are relatively small and also quite stable
with $p$.

\section{Real data applications: Source separation and full Bayesian analysis}\label{sec:real-data}
This section demonstrates two different aspects of the approximate fGn model in real data applications.  
First, the approximate model can be used as a tool for source separation of a combined signal, 
for example representing underlying cycles or variations for different time scales.
This will be illustrated in analysing the Nile river dataset (available in \texttt{R} as \texttt{FGN::NileMin}). These data give 
annual water level minimas for the period 622 - 1284, measured at the Roda Nilometer near Cairo.  Second, the approximate model can easily be combined with other model components within the general framework of latent Gaussian models and fitted efficiently using
   \texttt{R-INLA}. This is demonstrated in analysing the Hadley Centre Central England Temperature series 
(HadCET), available at \href{http://www.metoffice.gov.uk/hadobs}{\texttt{http://www.metoffice.gov.uk/hadobs}}. These data  give mean monthly measurements of  surface air temperatures for Central England in the period 1659 - 2016. The two datasets are illustrated in  Figure~\ref{fig:real-data}.  

\begin{figure}[h]
    \begin{center}
             \includegraphics[width=0.4\textwidth]{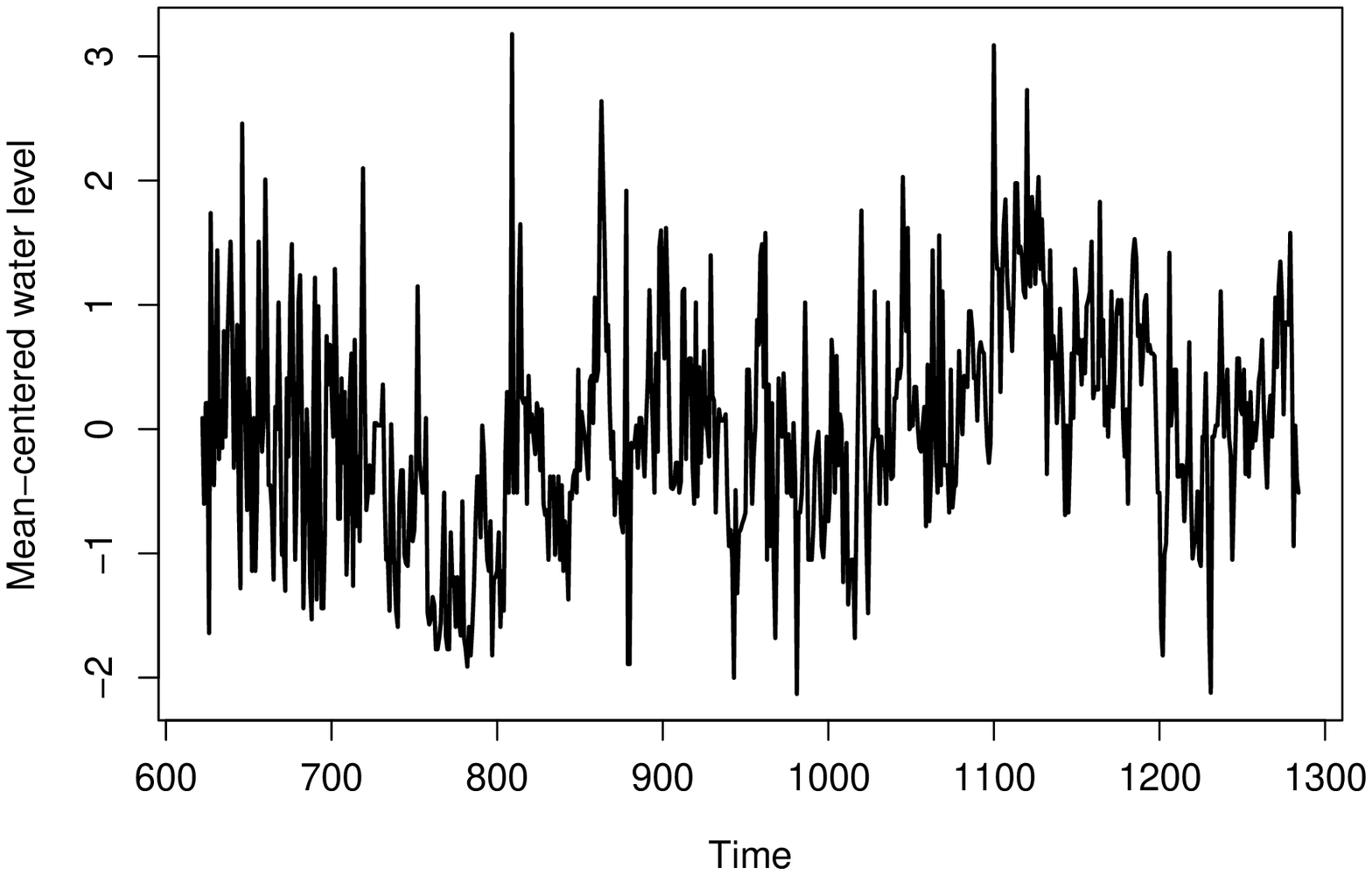}\hspace{0.5cm}
        \includegraphics[width=0.4\textwidth]{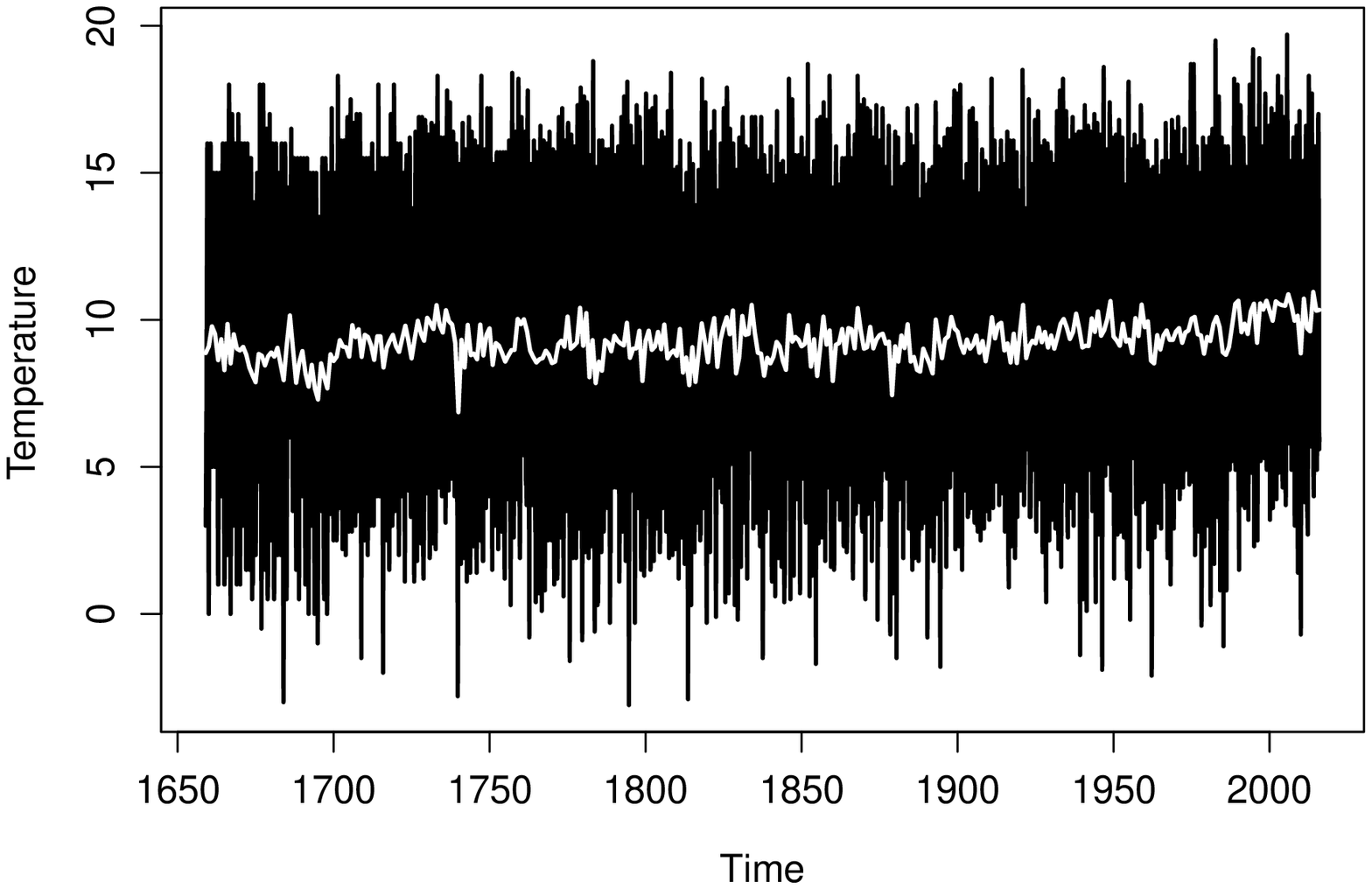}
         \caption{Left panel: Mean-centered annual minimum water level of the  Nile river. Right panel: Monthly mean surface air temperatures for Central England, also including the annual mean temperatures in white.}
        \label{fig:real-data}
    \end{center}
\end{figure}

\subsection{Signal separation for the Nile river annual minima}\label{sec:nile}
The Nile river dataset is a widely studied time series \citep{beranbok, eltahir:96} often used as an example of a real fGn process \citep{koutsoyiannis:02, benmehdi:11}.  Analysis of this dataset led to the discovery of the Hurst phenomenon \citep{hurst:51}. For hydrological time series, this phenomenon has  been explained as the tendency of having irregular clusters of wet and dry periods and can be related to characteristics of the fluctuations of the series at different temporal scales \citep{koutsoyiannis:02}.  

We can easily fit the exact fGn model to this dataset as the process is observed directly and the length of the series is only $n=663$. The maximum likelihood estimate for the Hurst exponent is $\hat H = 0.831$. Using the approximate fGn model with $m=4$, we get $\hat H=0.829$. The resulting four estimated weighted AR(1) components are illustrated in Figure~\ref{fig:nilenkomp}. The fitted autocorrelation coefficients for these components equal $\mm{\phi}=(0.999,0.982,0.847,0.291)$, while the weights are  $\mm{w}=(0.099, 0.129, 0.232, 0.540)$. The estimated standard deviation is $\hat \sigma=0.888$. 

As illustrated in Figure~\ref{fig:mapping}, the first autocorrelation coefficient will always be quite close to 1. This gives a slowly
varying trend, which in this case basically represents the mean.  
The second component also reflects a slowly varying signal, which can be interpreted to represent cycles of the water level fluctuations of about 200 - 250 years. The third component
seems to reflect shorter cycles of length 30 - 100 years. These cycles are seen to appear more irregularly and we also notice the 
tendency of having clusters of years with high and low water levels, respectively. The fourth component, which has the smallest autocorrelation coefficient and the largest weight, can be interpreted as weakly correlated annual noise.  These interpretations are  in correspondence with  the Hurst phenomenon, in which the components reflect signal fluctuations at different temporal scales. 

\begin{figure}[ht]
\begin{center}
\includegraphics*[width=0.4\textwidth,angle=0]{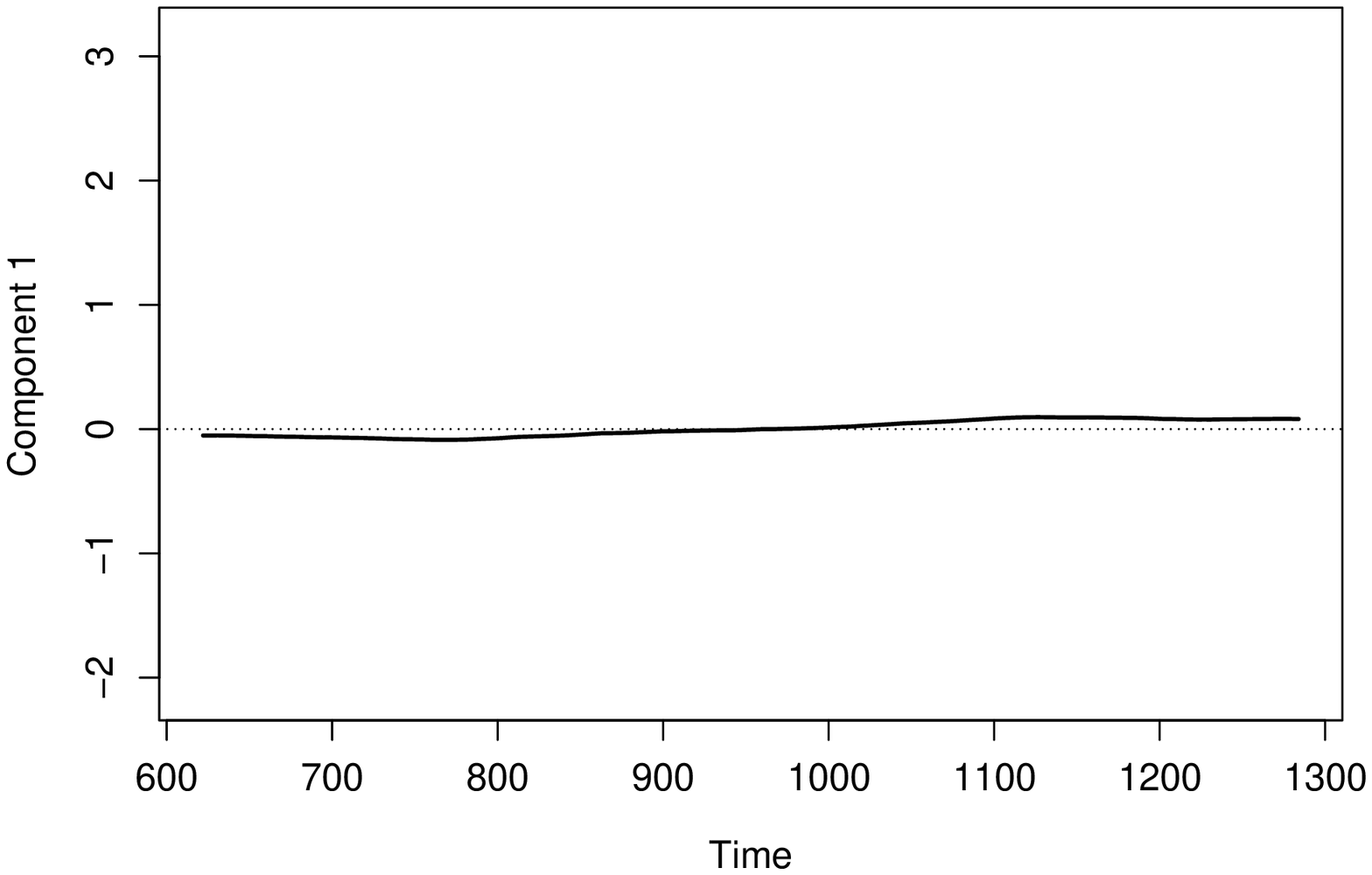}\hspace{0.5cm}
\includegraphics*[width=0.4\textwidth,angle=0]{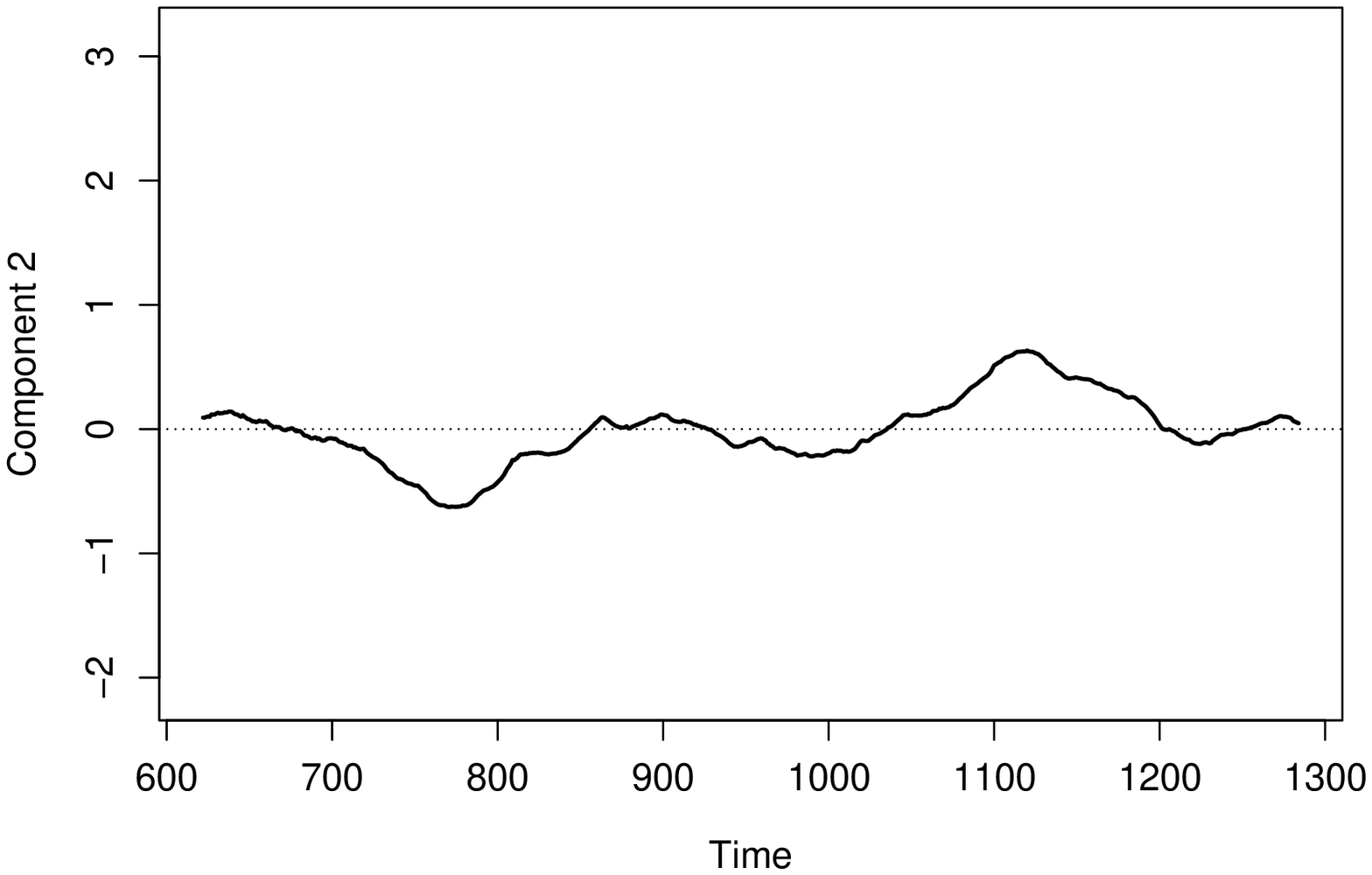} \\
\includegraphics*[width=0.4\textwidth,angle=0]{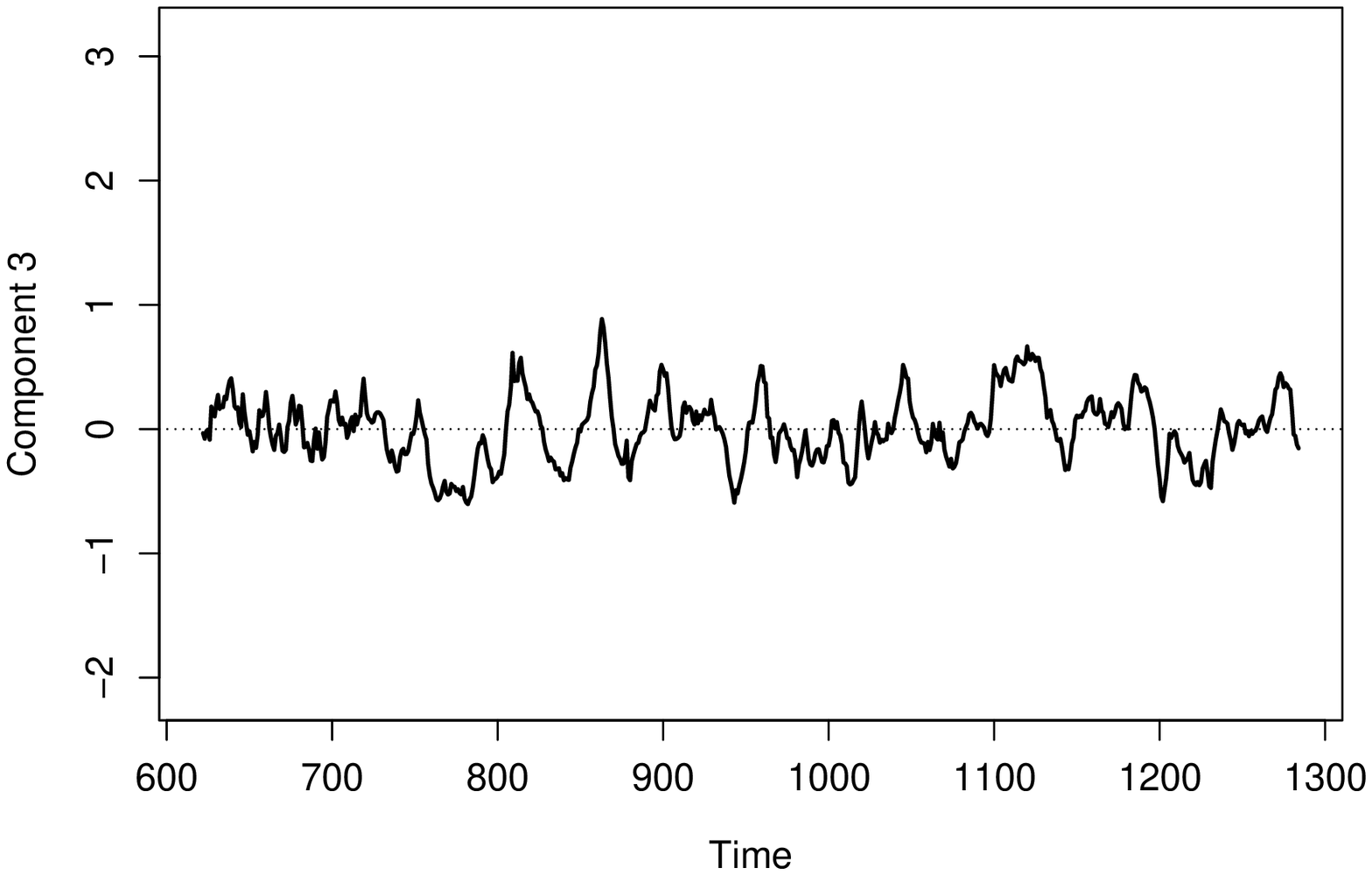}\hspace{0.5cm}
\includegraphics*[width=0.4\textwidth,angle=0]{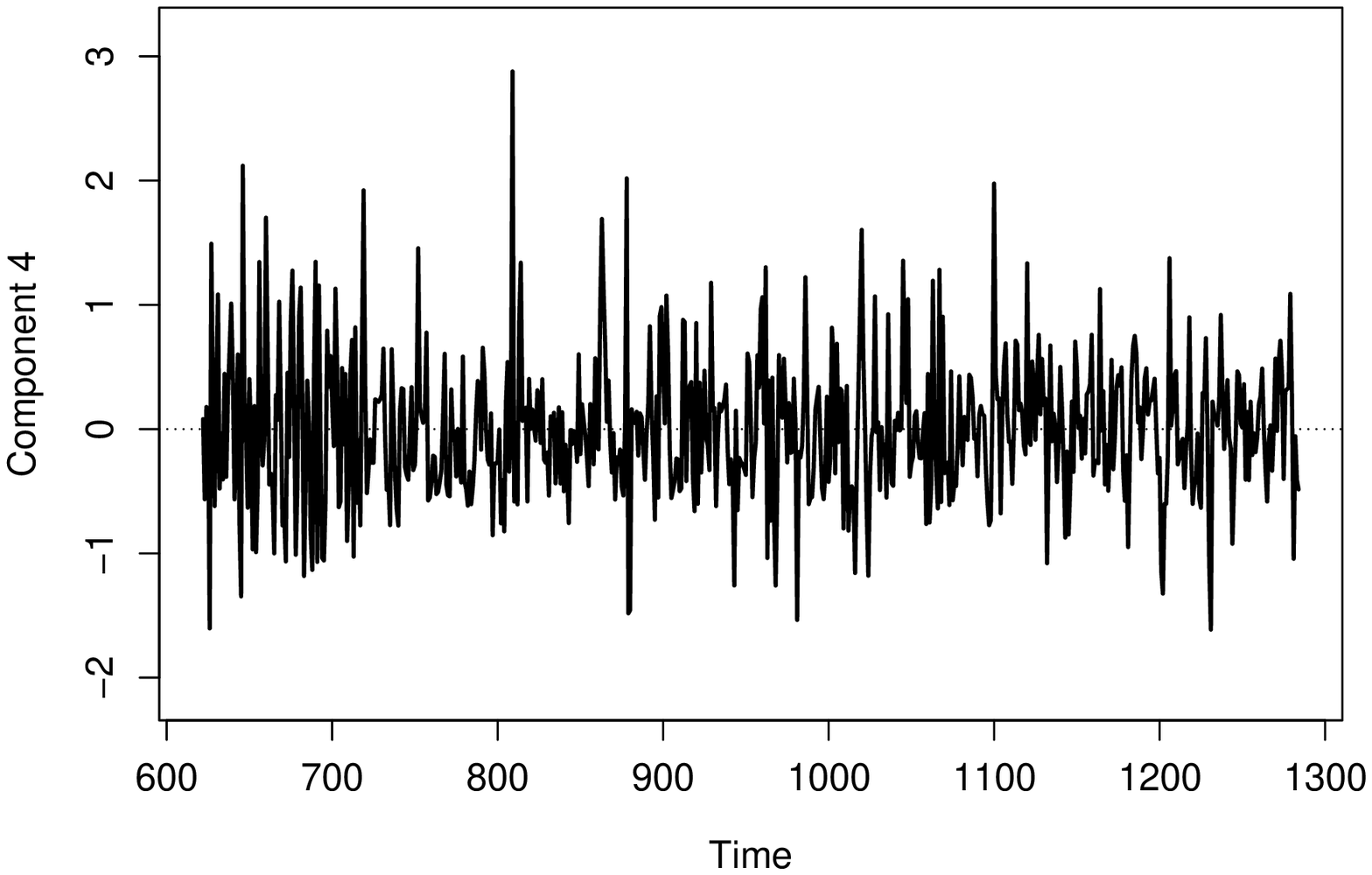}
\caption{The estimated weighted AR(1) components for mean-centered annual minimum water levels of the Nile river, using the approximate fGn model with $m=4$.}
\label{fig:nilenkomp}
\end{center}
\end{figure}

\subsection{Full Bayesian analysis of a temperature series }\label{sec:temp}

The HadCET series is the longest
existing instrumental record of monthly temperatures in the world. The observations started  in January 1659 and have been  updated monthly.  The observed temperatures do have uncertainties \citep{parker:05}, especially in the earliest years,  and has been revised several times \citep{manley:53, hadcet, parker:92}. We analyse temperatures  up to 
December 2016, which gives a total of $n=4296$
observations. 

In fitting a model to the given temperatures, we assume 
\begin{equation*}
    \E(y_t) = \beta_0 + \beta_1 t + s_t + x_t, \quad t=1,\ldots , 4296,
    \label{eq-hadcet:predictor}
\end{equation*}
where $y_t$ is the temperature in month $t$ (measured in degrees Celsius).  The given linear predictor  includes an intercept $\beta_0$,  a 
linear trend $\beta_1$, and  a seasonal effect $s_t$ of periodicity $q=12$
which captures monthly variations. This seasonal effect is modelled as an intrinsic GMRF of rank $n-q+1$, having precision parameter
$\tau_s$ \citep[p.~122]{ruebok} and scaled to have a generalized variance equal to 1 \citep{sorbye:13}. The term $x_t$ denotes the approximate fGn model with $m=4$, having precision parameter $\tau_x=\sigma^{-2}$. 

The parameters $\beta_0$ and $\beta_1$ are assigned vague Gaussian
priors, $\beta_i\sim N(0,10^3)$, while we use penalised 
complexity priors (PC priors) \citep{simpson:17} for all hyperparameters. 
This implies a type II Gumbel distribution
for the precision parameters $\tau_s$ and $\tau_x$, scaled 
using the probability
statement $P(\tau^{-1/2} > 1) = 0.01$.
The PC prior for $H$ \citep{sorbye:16} is scaled by assuming the
tail probability $P(H>0.9)=0.1$.

 \begin{figure}[h]
    \begin{center}
     \includegraphics[width=0.4\textwidth]{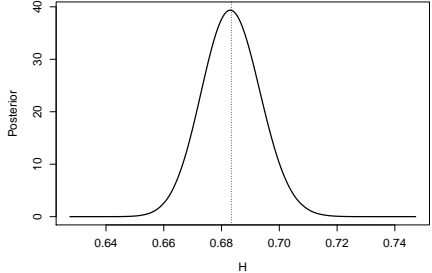}
           \caption{The marginal posterior $\pi(H\mid \mm{y})$, analysing the HadCET data.}
        \label{fig:hadcet-postmargs}
    \end{center}
\end{figure}

Analysis of the given model using an exact fGn term is infeasible in terms of
computational cost and memory usage. A MacBook Pro with 16GB of RAM
crashes due to memory shortage when analysing exact fGn processes of length $n>2500$. 
Using the approximate
fGn term with $m=4$, the full Bayesian analysis takes about 14 seconds.  The inference gives a significantly positive trend with posterior mean $\hat \beta_1 = 2.4\cdot 10^{-4}$  with $95\%$ credible interval $(1.2\cdot 10^{-4},3.7\cdot 10^{-4})$. This  corresponds to an overall increase in temperature of about $1.05 \pm 0.53$ degrees Celsius. 
The marginal posterior for $H$  is illustrated in Figure~\ref{fig:hadcet-postmargs}. The posterior mean is $\hat H = 0.684$ with  $95\%$ credible interval equal to $(0.664, 0.704)$. 

\section{Concluding remarks}\label{sec:conclusion}
In this paper, we obtain a remarkably 
accurate approximation of fGn 
using a weighted sum of
only four \AR1 components. Our approximate fGn model has a small loss of accuracy 
for the whole long memory range of $H$. The key idea to obtain this
is to ensure that the  approximate model captures the most essential part
 of the autocorrelation structure of the exact fGn model. This is achieved
 by appropriate weighting, matching the autocorrelation structure up to a specified maximum lag. 
 
By construction, the autocorrelation function of
the approximate model has an exponential decay for lags larger
than the specified maximum lag. This
implies that the approximate model does not satisfy formal definitions
of long memory processes. However, this trade-off is needed to make 
analysis of realistically complex models computationally feasible. The great benefit of
the resulting 
approximation is that it has a GMRF structure. This is crucial, especially 
as computations can be performed equally efficient in unconditional and conditional scenarios.  

An approximate model can never reflect the properties of the exact
model perfectly, but neither does a theoretical model in explaining an
observed data set. In theory, the fGn model corresponds to an
aggregation of an infinite number of \AR1 components which indicates
that the model is difficult to interpret in practice. The given decomposition 
of just a few \AR1 terms might provide a more realistic model. As an example, 
we have provided a decomposition of the Nile river data, which reflects 
fluctuations and cycles for different temporal scales. Such a decomposition could
also be valuable in analysing climatic time series. For example, long memory in  temperature series 
has been related  to an aggregation of a few simple underlying geophysical processes \citep{fredriksen:17}. 

Implementation of the approximate fGn model in \texttt{R-INLA} provides an easy-to-use tool to analyse
models with fGn structure. As demonstrated in the temperature example, we can 
easily combine the fGn model component with other terms in an additive linear predictor, for example covariates, other random effects
or deterministic effects like climate forcing.  Also, we do see a potential to incorporate fGn model components in  analysis of 
 spatial time series, for example by making use of the methodology in  \cite{lindgren:11}.  

\section*{Acknowledgement}
The authors acknowledge the Met Office Hadley Centre for making the HadCET data freely available at \href{http://www.metoffice.gov.uk/hadobs}{\texttt{http://www.metoffice.gov.uk/hadobs}}. We also  want to thank Finn Lindgren, Hege-Beate Fredriksen and Martin Rypdal for helpful discussions.
\bibliographystyle{apa}
\bibliography{bibliotek}{}

\begin{thebibliography}{}

\bibitem[\protect\astroncite{Baillie}{1996}]{baillie:96}
Baillie, R.~T. (1996).
\newblock Long memory processes and fractional integration in econometrics.
\newblock {\em Journal of Econometrics}, 73:5--59.

\bibitem[\protect\astroncite{Benhmehdi et~al.}{2011}]{benmehdi:11}
Benhmehdi, S., Makarava, N., Menhamidouche, N., and Holschneider, M. (2011).
\newblock Bayesian estimation of the self-similarity exponent of the {N}ile
  {R}iver fluctuation.
\newblock {\em Nonlinear Processes in Geophysics}, 18:441--446.

\bibitem[\protect\astroncite{Beran}{1994}]{beranbok}
Beran, J. (1994).
\newblock {\em Statistics for Long-Memory Processes}.
\newblock Chapman \& Hall/CRC, New York, 1st edition.

\bibitem[\protect\astroncite{Beran et~al.}{2013}]{beran:13}
Beran, J., Feng, Y., Ghosh, S., and Kulik, R. (2013).
\newblock {\em Long-Memory Processes - Probabilistic Properties and Statistical
  Methods}.
\newblock Springer, Heidelberg.

\bibitem[\protect\astroncite{Beran et~al.}{2010}]{beran:10}
Beran, J., Sch\"{u}tzner, M., and Ghosh, S. (2010).
\newblock From short to long memory: Aggregation and estimation.
\newblock {\em Computational Statistics and Data Analysis}, 54:2432--2442.

\bibitem[\protect\astroncite{Cont}{2005}]{cont:05}
Cont, R. (2005).
\newblock Long range dependence in financial markets.
\newblock In {\em Fractals in Engineering}, pages 159--180. Springer, London.

\bibitem[\protect\astroncite{Doukhan et~al.}{2003}]{taqqu:03}
Doukhan, P., Oppenheim, G., and Taqqu, M. (2003).
\newblock {\em Theory and Applications of Long-Range Dependence}.
\newblock Birkhauser Boston, c/o Springer-Verlag, New York Inc.

\bibitem[\protect\astroncite{Durbin}{1960}]{durbin1960}
Durbin, J. (1960).
\newblock The fitting of time-series models.
\newblock {\em Review of the International Statistical Institute}, 28:233--244.

\bibitem[\protect\astroncite{Eltahir}{1996}]{eltahir:96}
Eltahir, E. A.~B. (1996).
\newblock El {N}i{\~n}o and the natural variability in the flow of the {N}ile
  {R}iver.
\newblock {\em Water Resources Research}, 32:131--137.

\bibitem[\protect\astroncite{Franzke}{2012}]{franzke:12}
Franzke, C. (2012).
\newblock Nonlinear trends, long-range dependence, and climate noise properties
  of surface temperature.
\newblock {\em Journal of Climate}, 25:4172--4182.

\bibitem[\protect\astroncite{Fredriksen and Rypdal}{2017}]{fredriksen:17}
Fredriksen, H.-B. and Rypdal, M. (2017).
\newblock Long-range persistence in global surface temperatures explained by
  linear multibox energy balance models.
\newblock {\em Journal of Climate}, doi.org/10.1175/JCLI-D-16-0877.1.

\bibitem[\protect\astroncite{Golub and Loan}{1996}]{golub:96}
Golub, G. and Loan, C.~V. (1996).
\newblock {\em Matrix Computations}.
\newblock Johns Hopkins University Press, Baltimore, 3rd edition.

\bibitem[\protect\astroncite{Granger}{1980}]{granger1980}
Granger, C. W.~J. (1980).
\newblock Long memory relationships and the aggregation of dynamic models.
\newblock {\em Journal of Econometrics}, 14:227--238.

\bibitem[\protect\astroncite{Haldrup and Vald{\'e}s}{2017}]{aggregationsims}
Haldrup, N. and Vald{\'e}s, J. E.~V. (2017).
\newblock Long memory, fractional integration and cross-sectional aggregation.
\newblock {\em Journal of Econometrics}, 199:1--11.

\bibitem[\protect\astroncite{Hosking}{1984}]{hosking:84}
Hosking, J. R.~M. (1984).
\newblock Modeling persistence in hydrological time series using fractional
  differencing.
\newblock {\em Water Resources Research}, 20:1898--1908.

\bibitem[\protect\astroncite{Hurst}{1951}]{hurst:51}
Hurst, H.~E. (1951).
\newblock Long-term storage capacities of reservoirs.
\newblock {\em Transactions of the American Society for Civil Engineers},
  116:770--799.

\bibitem[\protect\astroncite{Koutsoyiannis}{2002}]{koutsoyiannis:02}
Koutsoyiannis, D. (2002).
\newblock The {H}urst phenomenon and fractional {G}aussian noise made easy.
\newblock {\em Hydrological Sciences}, 47:573--595.

\bibitem[\protect\astroncite{Levinson}{1947}]{levinson1947}
Levinson, N. (1947).
\newblock The {W}iener (root mean square) error criterion in filter design and
  prediction.
\newblock {\em Journal of Mathematics and Physics}, 25:261--278.

\bibitem[\protect\astroncite{Lindgren et~al.}{2011}]{lindgren:11}
Lindgren, F., Rue, H., and Lindstr\"{o}m, J. (2011).
\newblock An explicit link between {G}aussian fields and {G}aussian {M}arkov
  random fields: {T}he stochastic partial differential equation approach (with
  discussion).
\newblock {\em Journal of the Royal Statistical Society, Series B},
  73:423--498.

\bibitem[\protect\astroncite{Mandelbrot and Wallis}{1969}]{mandelbrot:69}
Mandelbrot, B.~B. and Wallis, J.~R. (1969).
\newblock Global dependence in geophysical records.
\newblock {\em Water Resources Research}, 5:321--340.

\bibitem[\protect\astroncite{Manley}{1953}]{manley:53}
Manley, G. (1953).
\newblock The mean temperature of central {E}ngland, 1698 to 1952.
\newblock {\em Quarterly Journal of the Royal Meteorological Society},
  79:242--261.

\bibitem[\protect\astroncite{Manley}{1974}]{hadcet}
Manley, G. (1974).
\newblock Central {E}ngland temperatures: {M}onthly means 1659 to 1973.
\newblock {\em Quarterly Journal of the Royal Meteorological Society},
  100:389--405.

\bibitem[\protect\astroncite{McLeod and Hipel}{1978}]{mcleod:78}
McLeod, A.~I. and Hipel, K.~W. (1978).
\newblock Preservation of the rescaled adjusted range: 1. {A} reassessment of
  the {H}urst phenomenon.
\newblock {\em Water Resources Research}, 14:491--508.

\bibitem[\protect\astroncite{McLeod et~al.}{2007}]{mcleod:07}
McLeod, A.~I., Yu, H., and Krougly, Z.~L. (2007).
\newblock Algorithms for linear time series analysis: With {R} {P}ackage.
\newblock {\em Journal of Statistical Software}, 23:1--26.

\bibitem[\protect\astroncite{Parker and Horton}{2005}]{parker:05}
Parker, D.~E. and Horton, E.~B. (2005).
\newblock Uncertainties in the central {E}ngland temperature series since 1878
  and some changes to the maximum and minimum series.
\newblock {\em International Journal of Climatology}, 25:1173--1188.

\bibitem[\protect\astroncite{Parker et~al.}{1992}]{parker:92}
Parker, D.~E., Legg, T.~P., and Folland, C.~K. (1992).
\newblock A new daily central {E}ngland temperature series, 1772-1991.
\newblock {\em International Journal of Climatology}, 12:317--342.

\bibitem[\protect\astroncite{Rue}{2001}]{rue:01}
Rue, H. (2001).
\newblock Fast sampling of {G}aussian {M}arkov random fields.
\newblock {\em Journal of the Royal Statistical Society, Series B},
  63:325--338.

\bibitem[\protect\astroncite{Rue and Held}{2005}]{ruebok}
Rue, H. and Held, L. (2005).
\newblock {\em Gaussian Markov Random Fields: Theory and Applications}.
\newblock Chapman \& Hall/CRC, London.

\bibitem[\protect\astroncite{Rue et~al.}{2009}]{inlartikkel}
Rue, H., Martino, S., and Chopin, N. (2009).
\newblock Approximate {B}ayesian inference for latent {G}aussian models using
  integrated nested {L}aplace approximations (with discussion).
\newblock {\em Journal of the Royal Statistical Society, Series B},
  71:319--392.

\bibitem[\protect\astroncite{Rypdal and Rypdal}{2014}]{rypdal:14}
Rypdal, M. and Rypdal, K. (2014).
\newblock {L}ong-memory effects in linear response models of {E}arth's
  temperature and implications for future global warming.
\newblock {\em Journal of Climate}, 27:5240--5258.

\bibitem[\protect\astroncite{Simpson et~al.}{2017}]{simpson:17}
Simpson, D., Rue, H., Riebler, A., Martins, T.~G., and S{\o}rbye, S.~H. (2017).
\newblock Penalising model component complexity: A principled, practical
  approach to constructing priors.
\newblock {\em Statistical Science}, 232:1--28.

\bibitem[\protect\astroncite{S{\o}rbye and Rue}{2014}]{sorbye:13}
S{\o}rbye, S.~H. and Rue, H. (2014).
\newblock Scaling intrinsic {G}aussian {M}arkov random field priors in spatial
  modelling.
\newblock {\em Spatial Statistics}, 8:39--51.

\bibitem[\protect\astroncite{S{\o}rbye and Rue}{2017}]{sorbye:16}
S{\o}rbye, S.~H. and Rue, H. (2017).
\newblock Fractional {Gaussian} noise: Prior specification and model
  comparison.
\newblock {\em Environmetrics}, doi:10.1002/env.2457.

\bibitem[\protect\astroncite{Toussoun}{1925}]{toussoun:1925}
Toussoun, O. (1925).
\newblock M{'e}moire sur l'histoire du {N}il.
\newblock In {\em M{'e}moires a l'Institut d'Egypte}, volume~18, pages
  366--404.

\bibitem[\protect\astroncite{Trench}{1964}]{trench:64}
Trench, W.~F. (1964).
\newblock An algorithm for the inversion of finite {T}oeplitz matrices.
\newblock {\em Journal of the Society for Industrial and Applied Mathematics},
  12:515--522.

\bibitem[\protect\astroncite{Willinger et~al.}{1996}]{willinger:96}
Willinger, W., Paxson, V., and Taqqu, M.~S. (1996).
\newblock Self-similarity and heavy tails: Structural modeling of network
  traffic.
\newblock In {\em A Practical Guide to Heavy Tails: Statistical Techniques and
  Applications}, pages 27--53. Birkhauser, Boston.

\end{thebibliography}
\end{document}